\newcommand{\bra}[1]{\mbox{$\langle #1 |$}}
\newcommand{\ket}[1]{\mbox{$| #1 \rangle$}}
\newtheorem{lem}{Lemma}
\newtheorem{prop}{Proposition}
\newtheorem{cor}{Corollary}
\begin{document}

\title{Security Analysis of an Untrusted Source for Quantum Key Distribution: Passive Approach}
\author{Yi Zhao\footnote{Present address: California Institute of Technology, MC 12-33, Pasadena, CA, 91106, USA.}, Bing Qi, Hoi-Kwong Lo, and Li Qian}
\address{Center for Quantum Information and Quantum Control,\\
Department of Physics and Department of Electrical \& Computer
Engineering,\\ University of Toronto, Toronto, Ontario, M5S 3G4,
Canada.}

\ead{phy.zhao@utoronto.ca}

\begin{abstract}
We present a passive approach to the security analysis
of quantum key distribution (QKD) with an untrusted source. A complete proof of its unconditional security is also presented. This scheme has
significant advantages in real-life
implementations as it does not require fast optical switching or a quantum random number generator. The essential idea is to use a beam splitter to
 split each input pulse. We show that we can
characterize the source using a cross-estimate
technique without active routing of each pulse. We have derived analytical expressions for the passive
estimation scheme. Moreover, using simulations, we have considered
four real-life imperfections: Additional loss introduced by the
``plug \& play'' structure, inefficiency of the intensity monitor,
noise of the intensity monitor, and statistical fluctuation
introduced by finite data size. Our simulation results show that
the passive estimate of an untrusted source remains useful in practice,
despite these four imperfections. Also, we have performed
preliminary experiments, confirming the utility of our proposal
in real-life applications.
 Our proposal makes it
possible to implement the ``plug \& play'' QKD with
the security guaranteed, while keeping the implementation practical. 
\end{abstract}

\pacs{03.67.Dd, 03.67.Hk}

\maketitle

\section{Introduction}

Quantum key distribution (QKD) provides a means of sharing a secret
key between two parties, a sender Alice and a receiver Bob, securely
in the presence of an eavesdropper,
Eve~\cite{BB84,Ekert91,Review:Encyclopedia}. The unconditional
security of QKD has been rigorously proved~\cite{SecurityProofs},
even when implemented with imperfect real-life
devices~\cite{GLLP,ILM}. Decoy state method was proposed \cite{Decoy:Hwang,Decoy:LoISIT,Decoy:LoPRL,Decoy:Practical,
Decoy:WangPRL,Decoy:WangPRA} and
experimentally demonstrated
\cite{Decoy:ZhaoPRL,Decoy:ZhaoISIT} as a means
to dramatically improve the performance of QKD with imperfect
real-life devices with unconditional security still guaranteed
\cite{GLLP,Decoy:LoPRL}.

A large class of QKD setups adopts the so-called ``plug \& play''
architecture~\cite{Proposal:PnP,ExpQKD:PnP_67km}. In this setup, Bob
sends strong pulses to Alice, who encodes her quantum information on
them and attenuates these pulses to quantum level before sending
them back to Bob. Both phase and polarization drifts are
intrinsically compensated, resulting in a very stable and relatively
low quantum bit error rate (QBER). These significant practical
advantages make the ``plug \& play'' very attractive. Indeed, most
current commercial QKD systems are based on this particular scheme
\cite{IDQ,MagiQ}.

The security of ``plug \& play'' QKD was a long-standing open
question. A major concern arises from the following fact: When Bob
sends strong classical pulses to Alice, Eve can freely manipulate
these pulses, or even replace them with her own sophisticatedly
prepared pulses. That is, the source is equivalently controlled by
Eve in the ``plug \& play'' architecture. In particular, it is no
longer correct to assume that the photon number distribution is
Poissonian, as is commonly assumed in standard security proof. This
is a major reason why standard security proofs such as GLLP
\cite{GLLP} does not appear to apply directly to the ``plug and
play'' scheme.

It might be tempting to apply the central limit theorem \cite{CentralLimit} to the current problem. That is, the photons contained in a pulse after heavy attenuation obeys a Gaussian distribution asymptotically. The central limit theorem was adopted in \cite{ThrHack:TrojanHorse}.

However, the central limit theorem does not apply to the situation that the current paper is addressing. The current paper, as well as a previous work \cite{Security:UntrustedSource}, does not rely on the central limit theorem and removes the assumption on the input photon number distribution. i.e., our analysis applies to sources with an arbitrary photon number distribution. For example, imagine a source that follows a dual-delta distribution (i.e., the pulses sent by the source  contain either $n_1$ or $n_2$ photons, where $n_1$ and $n_2$ are large and different integers). In this case, even if Alice applies heavy attenuation on the input pulses, the resulting photon number per pulse distribution would be the sum of two Gaussian distributions, which in general is not a Gaussian distribution.

The dual-delta distributed source is of significant practical meaning rather than a purely imaginary source. Consider the case of the Trojan horse attack \cite{ThrHack:TrojanHorse}: An eavesdropper occasionally sends bright pulse to Alice and splits the corresponding output signal from Alice. In this case, the input photon number per pulse distribution on Alice's side would have two peaks: one corresponds to the photon number of the authentic source, and one corresponds to the sum of the photon numbers from the two pulses (one from the authentic source and the other one from the eavesdropper's probing pulse). The security analysis that is based on the central-limit theorem (e.g., Ref. \cite{ThrHack:TrojanHorse}) may not be directly applicable to this case. However, the analysis proposed in \cite{Security:UntrustedSource} and in the current paper can analyze such case simply by defining an appropriate input photon number range of the untagged bits such that most input pulses are included. Note that, the input photon number range for the untagged bits is defined in the post-processing stage, during which Alice has already collected the photon number distribution of the samples.

\begin{figure}\center
  \includegraphics[width=10cm]{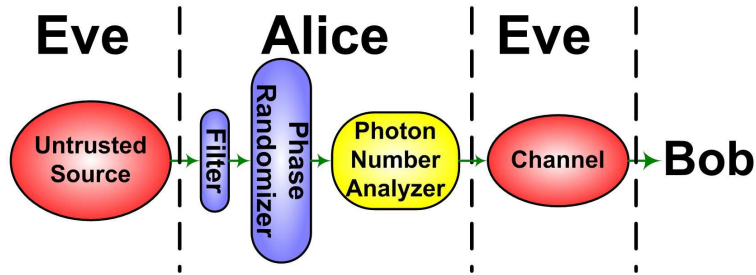}\\
  \caption{A general schematic of secure QKD with unknown and
  untrusted source. The Filter guarantees the single mode
  assumption. The Phase Randomizer guarantees the phase
  randomization assumption. The Photon Number Analyzer (PNA) estimates
  photon number distribution of the source. Various PNAs are shown
  in Figure \ref{FIG:Beamsplitter}.}\label{FIG:Setup_PNA}
\end{figure}

The unconditional security of ``plug \& play'' QKD
scheme has been recently proven in~\cite{Security:UntrustedSource}.
The basic idea is illustrated in Figure \ref{FIG:Setup_PNA}. A
Filter guarantees the single mode assumption. A Phase Randomizer
guarantees the phase randomization assumption. Note that for the state that is accessible to the eavesdropper, Alice's phase randomization is equivalent to a quantum non-demolition (QND) measurement of the photon numbers of the optical pulses. See \ref{App:PR} for details. Therefore, from now on, without loss of generality, we will assume that Alice's input signal is a classical mixture of Fock states and, similarly Alice's output signal is also a classical mixture of Fock states. A Photon Number
Analyzer (PNA) estimates photon number distribution of the source.
Detail of the PNA in \cite{Security:UntrustedSource} is shown in Figure
\ref{FIG:Beamsplitter}(a).

\begin{figure}\center
  \includegraphics[width=10cm]{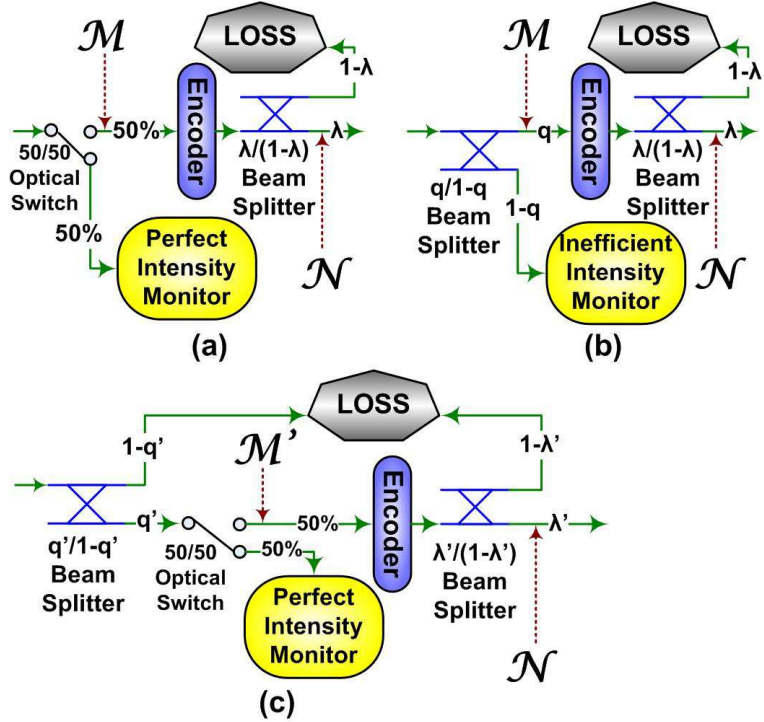}\\
  \caption{Different schemes to estimate photon number distribution.
  $\mathcal{M}$, $\mathcal{M'}$, and $\mathcal{N}$ are random
  variables for input photon number, virtual input photon number,
  and output photon number, respectively. All the internal loss of
  Alice is modeled as a $\lambda/1-\lambda$ beam splitter (in (a) and
  (b)) or a $\lambda'/1-\lambda'$ beam splitter (in (c)).
  (a) active scheme; (b) passive scheme; (c) hybrid scheme.
  $q'=\eta_\mathrm{IM}(1-q)$,
  where $\eta_\mathrm{IM}\le1$ is the efficiency of the imperfect
  intensity monitor. $\lambda' = q\lambda/q'$. Note that the scheme shown in
  (c) is a virtual set-up that has features from both the active scheme (a)
  and the passive scheme (b). The purpose of introducing this virtual scheme (c)
  is to bridge the active scheme (a) and the passive scheme (b).}\label{FIG:Beamsplitter}
\end{figure}

The analysis presented in \cite{Security:UntrustedSource} applies to
a general class of QKD with unknown and untrusted sources besides
``plug \& play'' QKD. For example, many QKD implementations use
pulsed laser diodes as the light source. These laser diodes are
turned on and off frequently to generate laser pulse sequence.
However, such laser pulses are not in coherent state
 and the photon number per pulse does not obey Poisson distribution
 \cite{Security:UntrustedSource}. Moreover, the go-and-return scheme
is also adopted by the recently proposed ground-satellite QKD
project \cite{ExpQKD:Ground_Satellite_Feasibility}, in which the
source is also equivalently unknown and untrusted.

\cite{Security:UntrustedSource} analyzes the photon number
distribution of an untrusted source in the following manner: Each
input pulse will be randomly routed to either an Encoder in Figure
\ref{FIG:Beamsplitter}(a) as a coding pulse, or a Perfect Intensity
Monitor in Figure \ref{FIG:Beamsplitter}(a) as a sampling pulse. The
photon numbers of each sampling pulse are individually measured by
the intensity monitor. In particular, one can obtain an estimate of
the fraction of coding pulses that has a photon number $m
\in[(1-\delta)M,(1+\delta)M]$ (here $\delta$ is a small positive real
number, and $M$ is a large positive integer. Both $\delta$ and $M$
are chosen by Alice and Bob). These bits are defined as ``untagged
bits''. The details of security analysis results of \cite{Security:UntrustedSource} are presented in \ref{App:Security}. We note that some security analyses about QKD with a fluctuating source have been reported recently \cite{Decoy:WangAPL,Decoy:WangInexactSource,Decoy:WangGeneralErrorSource, Decoy:WangNJP}.

It is challenging and inefficient to implement the scheme proposed
in~\cite{Security:UntrustedSource}, which is referred to as an active
scheme, for the following reasons: 1) The Optical Switch in Figure
\ref{FIG:Beamsplitter}(a) is an active component and requires
real-time control. The design and manufacture of the optical switch
and its controlling system can be very challenging in high-speed QKD
systems, which can operate as fast as 10 GHz~\cite{DPSK:200km}. 2) The random routing of optical pulses requires a high-speed sampling quantum random number generator (sampling QRNG), which does not yet exist for Gb/s systems. 3) 
The number of pulses sent to Bob is only a constant
fraction (say half) of the number of pulses generated by the source,
which means the key generation rate per pulse sent by the source is
reduced by that fraction.

Naturally, the optical switch can be replaced by a beam splitter,
which will passively split every input pulse, sending a portion into
the intensity monitor and the rest to the encoder. This is referred to
as a passive scheme. In this scheme, the sampling QRNG is not required.

A very recent work proposed some preliminary analysis on the passive
estimation of an untrusted source using inverse Bernoulli
transformation, and performed some  experimental tests
\cite{ExpQKD:PKU_Untrusted}. It is very encouraging to see that it is possible to prove the security of the passive estimate scheme for QKD with an untrusted source. As acknowledged by the authors of \cite{ExpQKD:PKU_Untrusted}, the inverse Bernoulli transformation is beyond the computational power of current computers, and the required photon number resolution is beyond the capabilities of practical photo diodes. Owing to the above challenges, the experimental data reported in
\cite{ExpQKD:PKU_Untrusted} were not analyzed by the analysis
proposed in the same paper.

In this paper, we propose a passive scheme to estimate the photon
number distribution of an untrusted source together with a complete
proof of its unconditional security. We show that the unconditional
security can still be guaranteed without routing each input optical
pulse individually. Our analysis provides both an analytical method to
calculate the final key rate and an explicit expression of the
confidence level. Moreover, we considered the inefficiency and
finite resolution of the intensity monitor, making our proposal
immediately applicable. In the numerical simulation, we considered
the additional loss introduced by the ``plug \& play'' structure and
the statistical fluctuation introduced by the finite data size. We
also gave examples of imperfect intensity monitors in the
simulation, in which a constant Gaussian noise is considered.

This paper is organized in the following way: in Section II, we will
propose a modified active estimate method; in Section III, we will
establish the equivalence between the modified active scheme
proposed in Section II and passive estimate scheme; in Section IV,
we will present a more efficient passive estimate protocol than the
one proposed in Section III; in Section V, we will present the
numerical simulation results of the protocol proposed in Section IV
and compare the efficiencies of active and passive estimates; in
Section VI, we will present a preliminary experiment based on our
proposed passive estimate protocol.

\section{Modified Active Estimate}

In \cite{Security:UntrustedSource}, it is shown that Alice can
randomly pick a fixed number of input pulses as sampling pulses, and
measure the number of untagged sampling bits. One can then estimate
the number of untagged coding bits.

We find that we can modify the scheme proposed in
\cite{Security:UntrustedSource} by drawing a
non-fixed number of input pulses as samples. A passive estimate can be
built on top of this modified active estimate scheme. Note that we
only modified the way to estimate the number of untagged coding
bits. Once the number of untagged coding bits is estimated, the
security analysis proposed in \cite{Security:UntrustedSource} is
still applicable to calculate the lower bound of secure key rate.

\begin{lem}\label{Lemma:ActiveConfidenceLevel}
Consider that $k$ pulses are sent to Alice from an unknown and
untrusted source, within which $V$ pulses are untagged. Alice
randomly assigns each bit as either a sampling bit or a coding bit
with equal probabilities (both are 1/2). In total, $V_\mathrm{s}$
sampling bits and $V_\mathrm{c}$ coding bits are untagged.
 The probability that $V_\mathrm{c}\le
V_\mathrm{s}-\epsilon k$ satisfies
\begin{equation}\label{Eq:SamplingCodingDeviation}
    P(V_\mathrm{c}\le V_\mathrm{s}-\epsilon k)\le
    \exp(-\frac{k\epsilon^2}{2})
\end{equation}
where $\epsilon$ is a small positive real number chosen by Alice and
Bob.

That is, Alice can conclude that $V_\mathrm{c}>
V_\mathrm{s}-\epsilon k$ with confidence level
\begin{equation}\label{EQ:Lemma1_ConfidenceLevel}
    \tau>1-\exp(-\frac{k\epsilon^2}{2})
\end{equation}

\end{lem}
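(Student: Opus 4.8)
The plan is to recognize that, conditional on which bits are untagged (hence on the value of $V$), each of the $V$ untagged bits is independently assigned to the sampling set with probability $1/2$. Thus $V_\mathrm{s}$ is a sum of $V$ independent Bernoulli$(1/2)$ random variables, so $V_\mathrm{s}\sim\mathrm{Bin}(V,1/2)$ with mean $V/2$, and $V_\mathrm{c}=V-V_\mathrm{s}$. The point to nail down first is that the identity of the untagged bits is fixed by the input photon numbers (chosen by the source and Eve) \emph{before} Alice performs her random assignment, so $V$ is independent of the coin flips and this binomial structure is exact rather than something Eve can bias.

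Next I would rewrite the target event purely in terms of $V_\mathrm{s}$. Since $V_\mathrm{c}-V_\mathrm{s}=V-2V_\mathrm{s}$, the event $\{V_\mathrm{c}\le V_\mathrm{s}-\epsilon k\}$ is identical to $\{V_\mathrm{s}\ge V/2+\epsilon k/2\}$, i.e.\ the event that the binomial $V_\mathrm{s}$ overshoots its mean by at least $\epsilon k/2$. This reduction is the crux, because it turns a two-sided comparison between $V_\mathrm{s}$ and $V_\mathrm{c}$ into a one-sided deviation of a single binomial from its mean.

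I would then control this upper tail with a Hoeffding (Chernoff-type) inequality. For a sum of $V$ independent variables each supported in $[0,1]$, Hoeffding's bound gives $P(V_\mathrm{s}-V/2\ge t)\le\exp(-2t^2/V)$; taking $t=\epsilon k/2$ yields $P(V_\mathrm{c}\le V_\mathrm{s}-\epsilon k\mid V)\le\exp(-\epsilon^2 k^2/(2V))$. Finally, since there are only $k$ pulses in total we have $V\le k$, hence $\epsilon^2 k^2/(2V)\ge\epsilon^2 k/2$, so the conditional bound is at most $\exp(-k\epsilon^2/2)$. Because this holds for every admissible value of $V$, averaging over the (independent) distribution of $V$ preserves it and gives \eqref{Eq:SamplingCodingDeviation}; the confidence-level statement \eqref{EQ:Lemma1_ConfidenceLevel} is then just its complement.

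I expect the only delicate point to be the conditioning argument, namely justifying that ``untagged'' is a property of the input photon number fixed independently of Alice's sampling/coding split, so that $V_\mathrm{s}\mid V$ is genuinely $\mathrm{Bin}(V,1/2)$. The relaxation from the fixed-sample-size scheme of \cite{Security:UntrustedSource} to a \emph{non-fixed} number of samples is precisely what lets us treat the per-bit assignments as independent coin flips; once that is in place, the Hoeffding tail bound and the step $V\le k$ are routine. The replacement of the $V$-dependent exponent by the $k$-dependent one is what makes the final bound depend only on quantities Alice actually knows, $k$ and $\epsilon$, without needing the value of $V$.
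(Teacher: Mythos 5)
Your proposal is correct and follows essentially the same route as the paper's own proof in \ref{App:Confidence_Active}: condition on $V$, note the binomial distribution of the assignment, apply Hoeffding's inequality to the deviation of size $\epsilon k/2$, and then use $V\le k$ to replace the $V$-dependent exponent $\epsilon^2k^2/(2V)$ by $\epsilon^2 k/2$. The only cosmetic difference is that you bound the upper tail of $V_\mathrm{s}$ while the paper bounds the lower tail of $V_\mathrm{c}$, which via $V=V_\mathrm{c}+V_\mathrm{s}$ is the identical event.
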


\begin{proof}
See \ref{App:Confidence_Active}.
\end{proof}

Note that the right hand side of Equation
\eqref{Eq:SamplingCodingDeviation} is independent of $V$. This is
important because Alice does not know the exact value of $V$, while
Eve may know, and may even manipulate the value of $V$. Nonetheless,
the inequality suggested in Equation
\eqref{Eq:SamplingCodingDeviation} holds for any possible value of
$V$. Therefore, Alice can always estimate that the
$V_\mathrm{c}>V_\mathrm{s}-\epsilon k$ with confidence level
$\tau_\mathrm{a}\ge1-\exp(-\frac{k\epsilon^2}{2})$. Note that the
estimate given in Lemma \ref{Lemma:ActiveConfidenceLevel} is
actually quite good for us because we will mainly be interested in
the case where $V$ is close to $k$.

\begin{figure}[!t]\center
  \includegraphics[width=11cm]{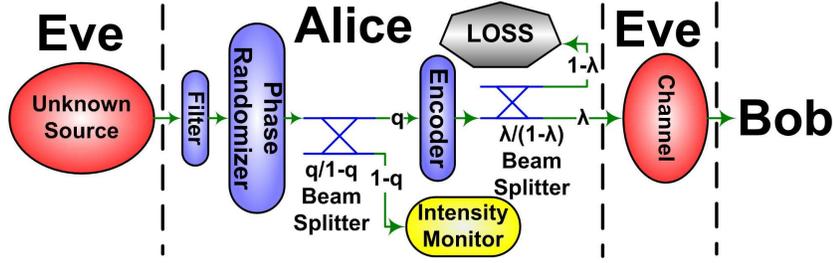}\\
  \caption{A schematic diagram of our proposed secure QKD scheme
  with passive estimate on an unknown and untrusted source.
  The Filter guarantees the single mode
  assumption,  and the $q/1-q$ Beam Splitter
  and the Intensity Monitor are used to passively
  estimate the photon number of input pulses.
  All the internal losses inside Alice's local
  lab is modeled as a $\lambda/(1-\lambda)$ beam
  splitter. That is, any input photon has $\lambda$
  probability to get encoded and sent from Alice to
  Bob, and $1-\lambda$ probability to be lost.}
 \label{FIG:Schematic}
\end{figure}

\section{From Active Estimate to Passive
Estimate}\label{Sec:ActiveToPassive}

The PNA of our proposed scheme is shown in Figure
\ref{FIG:Beamsplitter} (b) and the entire scheme is shown in Figure
\ref{FIG:Schematic}. We replaced the 50/50 Optical Switch in Figure
\ref{FIG:Beamsplitter} (a) by a $q/1-q$ Beam Splitter in Figure
\ref{FIG:Beamsplitter} (b). In this scheme, each input pulse is
passively split into two: One (defined as U pulse) is sent to the
encoder and transmitted to Bob, and the other (defined as L pulse)
is sent to the intensity monitor. The visualization of U/L pulses is
shown in Figure \ref{FIG:UL_Pulse}.

One may na\"{i}vely think that since the beam splitting ratio $q$ is
known, one can easily estimate the photon number of the U pulse from the
measurement result of photon number of the corresponding L pulse.
However, this is not true. Any input pulse, after the phase
randomization, is in a number state. Therefore, for a pair of U and L
pulses originating from the same input pulse, the total photon number
of the two pulses is an unknown constant. This restriction suggests
that we should not treat the photon numbers of such two pulses as
independent variables, and the random sampling theorem cannot be
directly applied.

\begin{figure}\center
  \includegraphics[width=8cm]{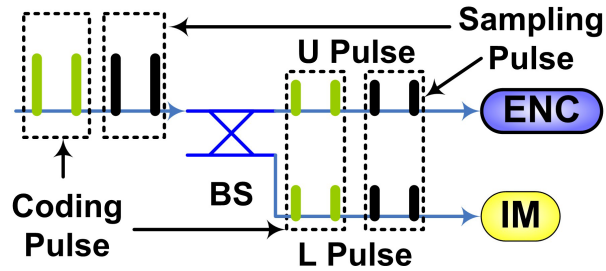}\\
  \caption{Visualization of different types of pulses. BS: Beam Splitter.
  ENC: Encoder. IM: Intensity Monitor. Each input pulse
  is randomly assigned as either a coding pulse or a sampling pulse. After
  entering the beam splitter, each pulse is split into a U pulse that enters
  the encoder, and an L pulse that enters the intensity monitor. As a result,
  there are four types of pulse: coding U pulse, coding L pulse, sampling U
  pulse, and sampling L pulse.}\label{FIG:UL_Pulse}
\end{figure}

To bridge the active scheme (in Figure \ref{FIG:Beamsplitter} (a))
and the passive scheme (in Figure \ref{FIG:Beamsplitter} (b)), we
introduce a virtual setup (in Figure \ref{FIG:Beamsplitter} (c)). We
call such a virtual set-up a ``hybrid'' scheme because it has
features from both the active and the passive schemes. 

We assume that the inefficiency of the intensity monitor can be
modeled as an additional loss \cite{ExpQKD:PKU_Untrusted}. In the passive scheme (Figure
\ref{FIG:Beamsplitter} (b)), assuming that the efficiency of the
intensity monitor is $\eta_\mathrm{IM}\le 1$, the probability that
an input photon is detected is
\begin{equation}\label{EQ:Virtual_Splitting_Ratio}
    q'=(1-q)\eta_\mathrm{IM}.
\end{equation}
Therefore, we could model the $q/1-q$ beam splitter and the
inefficient intensity monitor in Figure \ref{FIG:Beamsplitter} (b)
as a $q'/1-q'$ beam splitter and a perfect intensity monitor as in
Figure \ref{FIG:Beamsplitter} (c).

The above modification changes the probability that an input photon is sent to Bob. To ensure that an identical attenuation is applied to the coding
pulses in both the passive scheme (in Figure \ref{FIG:Beamsplitter}
(b)) and the hybrid scheme (in Figure \ref{FIG:Beamsplitter} (c)), we re-define the internal
transmittance in the virtual setup as
\begin{equation}\label{EQ:Virtual_Loss}
\lambda'=q\lambda/q'\le1
\end{equation}
For a given input photon number distribution, the output photon number distribution is determined by the internal loss \cite{Security:UntrustedSource}. Since the internal losses in the passive scheme and the hybrid scheme are identical, for a given input photon number distribution (which can be unknown), the output photon number distributions of the passive scheme is identical to that of the hybrid scheme. Moreover, the photon number distributions obtained by the intensity monitors are also identical for these two schemes.
 
Note that this virtual set-up is not actually used in an experiment,
but is purely for building the equivalence between the active and
the passive schemes.

By putting Equations
\eqref{EQ:Virtual_Splitting_Ratio} and \eqref{EQ:Virtual_Loss} together,
we have one constraint:
\begin{equation}\label{EQ:Restriction}
    \lambda'=\frac{q\lambda}{(1-q)\eta_\mathrm{IM}}\le1.
\end{equation}
This constraint is very easy to meet in an actual experiment as
$\lambda$ can be lower than $10^{-6}$ in a practical set-up
\cite{Security:UntrustedSource}, $q/(1-q)\le100$ in typical beam
splitters, and $\eta_\mathrm{IM}$ can be greater than 50\% in
commercial photo diodes \footnote{Several commercial high-speed
InGaAs photodiodes, including Thorlabs FGA04, JDSU EPM745 and Hamamatsu
G6854-01 are claimed to have conversion efficiency over $70\%$ at
1550nm.}.

The resolution of the intensity monitor is another important
imperfection. In a real experiment, the intensity monitor may indicate
a certain pulse contains $m'$ photons. Here we refer to $m'$ as
the \emph{measured} photon number in contrast to the \emph{actual}
photon number $m$. However, due to the noise and the inaccuracy of the
intensity monitor, this pulse may not contain exactly $m'$ photons.
To quantify this imperfection, we introduce a term ``the conservative
interval'' $\varsigma$. We then define $\underline{V}^\mathrm{L}$ as
the number of L pulses with measured photon number
$m'\in[(1-\delta)M+\varsigma,(1+\delta)M-\varsigma]$. One can
conclude that, with confidence level
$\tau_\mathrm{c}=1-c(\varsigma)$, the number of untagged L bits
$V^\mathrm{L}\ge \underline{V}^\mathrm{L}$. One can make
$c(\varsigma)$ arbitrarily close to 0 by choosing large enough
$\varsigma$ \footnote{The specific expression of $c(\varsigma)$ depends
on properties of a specific intensity monitor. Nonetheless, one can
always make $c(\varsigma)$ arbitrarily close to 0 by choosing a
large enough $\varsigma$. That is, $\forall\zeta>0$, we can always
find $\underline{\varsigma}\in[0,\delta M]$ such that for any
$\varsigma\ge\underline{\varsigma}$, we have $c(\varsigma)<\zeta$.
Note that $c(\delta M)=0$.}. The conservative interval is a statistical
property rather than an individual property. That is, for one
individual pulse, the probability that $|m-m'|>\varsigma$ can be
non-negligible.

In the virtual setup, input pulses are treated in the same manner as
in the active estimate scheme: Coding pulses are routed to the encoder
and then sent to Bob, while the sampling pulses are routed to the
perfect intensity monitor to measure their photon numbers. We can
use the measurement results of sampling pulses to estimate the
number of untagged bits in the coding pulses. Knowing the number of
untagged bits, one can easily calculate the upper and lower bounds
of the output photon number probabilities
\cite{Security:UntrustedSource}.

Since the passive scheme and the hybrid scheme share the same
source, the output photon number distribution is solely determined
by the internal loss. The internal transmittances for the coding bits
are the same ($q'\lambda'=q\lambda$) for both schemes. Therefore,
the upper and lower bounds of output photon number probabilities
estimated from the hybrid scheme are also valid for those of the
passive scheme.

\begin{cor}\label{Col:1}
 Consider $k$ pulses sent from an unknown and untrusted source
to Alice, where $k$ is a large positive integer. Alice randomly
assigns each input pulse as either a sampling pulse or a coding
pulse with equal probabilities. Define variables
$V_\mathrm{s}^\mathrm{L}$ and $V_\mathrm{c}^\mathrm{U}$ as the
number of untagged sampling L pulses and the number of untagged
coding U pulses, respectively. Here U pulses are defined as pulses
sent to the Encoder in Figure \ref{FIG:UL_Pulse}, and L pulses are
defined as pulses sent to the Intensity Monitor in Figure
\ref{FIG:UL_Pulse}. Alice can conclude that
$V_\mathrm{c}^\mathrm{U}> V_\mathrm{s}^\mathrm{L}-\epsilon_1 k$ with
confidence level $\tau_1 \ge 1-e^{-k\epsilon_1^2/2}$. Here
$\epsilon_1$ is a small positive real number chosen by Alice and Bob. To
calculate the upper and lower bounds of output photon number
probabilities, one should use equivalent internal transmittance
$\lambda'$, which is given in Equation \eqref{EQ:Restriction},
instead of actual internal transmittance $\lambda$.
\end{cor}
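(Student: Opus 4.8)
The plan is to obtain Corollary~\ref{Col:1} as a direct consequence of Lemma~\ref{Lemma:ActiveConfidenceLevel}, transported from the hybrid scheme of Figure~\ref{FIG:Beamsplitter}(c) to the passive scheme of Figure~\ref{FIG:Beamsplitter}(b) through the equivalence established in the preceding discussion. First I would fix, once and for all, a notion of \emph{untagged} as a property of each input pulse (equivalently, of the virtual input photon number $\mathcal{M}'$ in the hybrid setup), so that whether a pulse is untagged does not depend on its subsequent random assignment as a coding or a sampling pulse. With this convention the $V$ untagged pulses among the $k$ inputs form a single pool, and the $50/50$ routing partitions this pool into untagged coding pulses and untagged sampling pulses exactly as required by the hypotheses of Lemma~\ref{Lemma:ActiveConfidenceLevel}.

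The key identification to make is that, in the hybrid scheme, a coding pulse manifests as the U pulse delivered to the encoder and eventually to Bob, while a sampling pulse is read out as the L pulse at the perfect intensity monitor. Hence $V_\mathrm{c}^\mathrm{U}$ is precisely the number $V_\mathrm{c}$ of untagged coding pulses, and $V_\mathrm{s}^\mathrm{L}$ is precisely the number $V_\mathrm{s}$ of untagged sampling pulses, in the notation of Lemma~\ref{Lemma:ActiveConfidenceLevel}. Applying that lemma with $\epsilon$ replaced by $\epsilon_1$ then yields $P(V_\mathrm{c}^\mathrm{U}\le V_\mathrm{s}^\mathrm{L}-\epsilon_1 k)\le e^{-k\epsilon_1^2/2}$, so that $V_\mathrm{c}^\mathrm{U}>V_\mathrm{s}^\mathrm{L}-\epsilon_1 k$ holds with confidence level $\tau_1\ge 1-e^{-k\epsilon_1^2/2}$, as claimed. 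Because the bound of Lemma~\ref{Lemma:ActiveConfidenceLevel} is uniform over all values of $V$, this step does not require knowledge of $V$, which is essential since Eve may control it.

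Finally, I would carry the conclusion back to the passive scheme. Since the passive and hybrid schemes share the same source and the same coding-bit internal transmittance ($q'\lambda'=q\lambda$), they produce identical output photon number distributions and identical intensity-monitor statistics; consequently the probabilistic statement proved in the hybrid scheme holds verbatim in the passive scheme. To convert the count $V_\mathrm{c}^\mathrm{U}$ into the upper and lower bounds on the output photon number probabilities, one must feed the machinery of~\cite{Security:UntrustedSource} with the equivalent internal transmittance $\lambda'$ of Equation~\eqref{EQ:Restriction} rather than the physical $\lambda$, precisely because the hybrid scheme's attenuation is $\lambda'$ while its splitting ratio is $q'$.

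I expect the main obstacle to be the correspondence step itself: justifying that \emph{untagged} can be assigned consistently to both the U and L descendants of a common input pulse even though, in the passive implementation, a coding pulse's untagged status is never directly observed (only the L pulse of a sampling pulse is measured, and only up to the conservative interval $\varsigma$). The crux is to verify that this hidden classification is still governed by the same random $50/50$ assignment, and that the equivalence of intensity-monitor statistics lets the measured L-pulse readings stand in for the untagged sampling count $V_\mathrm{s}^\mathrm{L}$ without disturbing the applicability of Lemma~\ref{Lemma:ActiveConfidenceLevel}.
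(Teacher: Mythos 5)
Your proposal is correct and follows essentially the same route as the paper's own proof: the paper likewise treats the sampling L pulses and the (virtually $q'$-attenuated) coding U pulses as a single pool of pulses that have passed the same attenuation $q'$, notes that Alice's 50/50 assignment randomly partitions this pool so that Lemma~\ref{Lemma:ActiveConfidenceLevel} applies directly to give $P(V_\mathrm{c}^\mathrm{U}\le V_\mathrm{s}^\mathrm{L}-\epsilon_1 k)\le e^{-k\epsilon_1^2/2}$, and then assigns the untagged coding U pulses the internal transmittance $\lambda'=q\lambda/q'$ because their overall transmittance is $q\lambda=q'\lambda'$. Your insistence that ``untagged'' be a pre-assignment property, determined by the virtual photon number $\mathcal{M}'$ after the common $q'$ attenuation rather than by anything downstream of the routing, is precisely the point the paper's proof relies on when it describes the two kinds of pulses as ``a group of pulses that go through the same attenuation.''
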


\begin{proof}
The sampling L pulses are sent to a perfect intensity monitor with a probability $q'=q\eta_\text{IM}$. If we apply the same transmittance $q'$ to the coding U pulses, we can consider that sampling L pulses and the coding U pulses as a group of pulses that go through the same attenuation, and we randomly assign each pulse in the group as either a sampling L pulse or a coding U pulse with equal probabilities. Therefore, one can conclude that $V_\mathrm{c}^\mathrm{U}> V_\mathrm{s}^\mathrm{L}-\epsilon_1 k$ with
confidence level $\tau_1 \ge 1-e^{-k\epsilon_1^2/2}$ by applying Lemma \ref{Lemma:ActiveConfidenceLevel}. Since the overall transmittance for the U pulses is $q\lambda$, the internal transmittance for the untagged coding U pulses should be considered as $\lambda'=q\lambda/q'$.
\end{proof}

There is no physical location (eg. between the Beam Splitter and the Encoder in Figure \ref{FIG:UL_Pulse}) where the U pulses see a transmittance of $q'$ in the passive scheme. The output photon number probabilities of the coding U pulses are analyzed in the following manner: The coding U pulses, after propagating through a virtual transmittance $q'$, contains $V_\text{c}^\text{U}$ untagged bits. These coding U pulses then propagate through another virtual transmittance $\lambda'$, and we can calculate the output photon number probabilities, which is identical to the output photon number probabilities generated by sending the coding U pulses through the real transmittance $q\lambda=q'\lambda'$. 

Note that, it is not clear to us how to use random sampling theorem
to estimate the number of untagged \emph{coding} ``U'' pulses from
the number of untagged \emph{coding} ``L'' pulses. This is due to
the correlations between corresponding ``L'' and ``U'' pulses. As
discussed before, their photon numbers are not independent
variables. We are applying a restricted sampling where we draw only
one sample from each pair of U and L pulses.

A common imperfection is the inaccuracy of beam splitting ratio $q$.
One can calibrate the value of $q$, but only with a finite
resolution. In the security analysis, one should pick the most
conservative value of $q$ within the calibrated range. That is, the
value of $q$ that suggests the lowest key generation rate. Similar
strategy should be applied to the inaccuracy of internal
transmittance $\lambda$.

\section{Efficient Passive Estimate on Untrusted Source}

In the above analysis, only half pulses (coding pulses) are used to
generate the secure key. Note that we can also use the measurement
result of coding ``L'' pulses to estimate the number of untagged
sampling ``U'' pulses as there is no physical difference between
sampling pulses and coding pulses. Note that Alice has the
knowledge of the number of untagged coding ``L'' pulses. We have the
following statement:

\begin{cor}\label{Col:2}
Consider $k$ pulses sent from an unknown and untrusted source to
Alice, where $k$ is a large positive integer. Alice randomly assigns
each input pulse as either a sampling pulse or a coding pulse with
equal probabilities. Define variables $V_\mathrm{c}^\mathrm{L}$ and
$V_\mathrm{s}^\mathrm{U}$ as the number of untagged coding L pulses
and the number of untagged sampling U pulses, respectively. Here U
pulses are defined as pulses sent to the Encoder in Figure
\ref{FIG:UL_Pulse}, and L pulses are defined as pulses sent to the
Intensity Monitor in Figure \ref{FIG:UL_Pulse}. Alice can conclude
that $V_\mathrm{s}^\mathrm{U}> V_\mathrm{c}^\mathrm{L}-\epsilon_2 k$
with confidence level $\tau_2 \ge 1-e^{-k\epsilon_2^2/2}$. Here
$\epsilon_2$ is a small positive real number chosen by Alice and Bob.
\end{cor}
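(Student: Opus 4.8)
The plan is to reduce Corollary \ref{Col:2} to Lemma \ref{Lemma:ActiveConfidenceLevel} by exploiting the complete symmetry between the ``sampling'' and ``coding'' labels, exactly mirroring the argument used for Corollary \ref{Col:1}. Since Alice assigns each of the $k$ input pulses to be sampling or coding independently and with equal probability $1/2$, the two labels are fully interchangeable: any statement proved with the roles of ``sampling L'' and ``coding U'' can be re-derived verbatim for ``coding L'' and ``sampling U''.

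First I would form the relevant group of pulses. Each coding-assigned input pulse contributes a coding L pulse (sent to the intensity monitor), and each sampling-assigned input pulse contributes a sampling U pulse (sent to the encoder and on to Bob). Because the coding and sampling assignments partition the $k$ input pulses, every input pulse contributes exactly one member to the combined set $\{\text{coding L},\,\text{sampling U}\}$, and the label it receives within this set is precisely its original coding/sampling assignment, hence uniform and independent with probability $1/2$.

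Next I would equalize the attenuation seen by the two pulse types. As in the hybrid construction of Section \ref{Sec:ActiveToPassive}, the coding L pulses reach the perfect intensity monitor through the effective transmittance $q'$, and I would apply the same \emph{virtual} transmittance $q'$ to the sampling U pulses, with the remaining attenuation $\lambda'=q\lambda/q'$ toward Bob applied afterward so that $q'\lambda'=q\lambda$ preserves the true output statistics. With both members of the group passing through the common transmittance $q'$, the untagged counts $V_\mathrm{c}^\mathrm{L}$ and $V_\mathrm{s}^\mathrm{U}$ are defined consistently on the same post-$q'$ photon-number range. Lemma \ref{Lemma:ActiveConfidenceLevel} then applies with its ``sampling'' role played by the measured pulses (coding L) and its ``coding'' role played by the to-Bob pulses (sampling U), yielding $V_\mathrm{s}^\mathrm{U}>V_\mathrm{c}^\mathrm{L}-\epsilon_2 k$ with confidence $\tau_2\ge 1-e^{-k\epsilon_2^2/2}$.

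The step deserving the most care is checking that no spurious correlation obstructs this random-sampling argument. The paper stresses that one cannot relate, say, coding U to coding L, since these originate from the \emph{same} input pulse and their photon numbers are correlated through the fixed total of the underlying number state. The crucial point here is that coding L and sampling U arise from \emph{disjoint} sets of input pulses, so the restricted-sampling condition (at most one sample drawn per U/L pair) is automatically satisfied and the independence required by Lemma \ref{Lemma:ActiveConfidenceLevel} genuinely holds. Establishing this disjointness is what legitimizes the direct appeal to the lemma, and the conclusion then follows immediately.
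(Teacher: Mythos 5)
Your proposal is correct and is essentially the paper's own argument: the paper gives no separate proof of Corollary~\ref{Col:2}, justifying it only by the remark that ``there is no physical difference between sampling pulses and coding pulses,'' i.e.\ by the same symmetry you invoke, and your detailed steps (grouping coding L with sampling U, noting each input pulse contributes exactly one member of the group, equalizing attenuation via the virtual transmittance $q'$ with $q'\lambda'=q\lambda$, then applying Lemma~\ref{Lemma:ActiveConfidenceLevel} with the roles swapped) are exactly the role-exchanged version of the paper's proof of Corollary~\ref{Col:1}. Your explicit check that coding L and sampling U pulses originate from disjoint sets of input pulses, so the U/L photon-number correlation never enters, is a worthwhile clarification of the point the paper leaves implicit.
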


A natural question is: Since Alice has the knowledge about both
$V_\mathrm{s}^\mathrm{L}$ and $V_\mathrm{c}^\mathrm{L}$, how can she
estimate the number of total untagged U pulses, $V^\mathrm{U}
(=V_\mathrm{s}^\mathrm{U}+V_\mathrm{c}^\mathrm{U})$?

Combining all untagged U bits is not entirely trivial. Consider that
the untrusted source generates $k$ pulses. Each of them is divided
into 2 pulses. Therefore Alice and Bob have $2k$ pulses to analyze.
However, these $2k$ pulses are \emph{not} independent because the
beam splitter clearly creates correlations between the corresponding
L pulse and U pulse. A na\"{i}ve application of the random sampling
theorem, ignoring the correlation between U pulses and L pulses, may
lead to security loophole.

\begin{lem}

Consider $k$ pulses sent from an unknown and untrusted source to
Alice. Alice randomly assigns each input pulse as either a sampling
pulse or a coding pulse with equal probabilities. Each input pulse
is split into a U pulse and an L pulse (see Figure
\ref{FIG:UL_Pulse} for visualization). The probability that
$V^\mathrm{U}\le V_\mathrm{s}^\mathrm{L}+V_\mathrm{c}^\mathrm{L}-
\epsilon_1 k - \epsilon_2 k$ satisfies:

\begin{equation}\label{EQ:Lemma2}
    P(V^\mathrm{U}\le V_\mathrm{s}^\mathrm{L}+V_\mathrm{c}^\mathrm{L}-
(\epsilon_1 + \epsilon_2) k)\le \exp(\frac{-k\epsilon_1^2}{2}) +
\exp(\frac{-k\epsilon_2^2}{2}).
\end{equation}

\end{lem}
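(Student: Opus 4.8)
The plan is to derive the statement directly from Corollaries \ref{Col:1} and \ref{Col:2} by a union bound, with no new sampling argument required. First I would rephrase the two corollaries as upper tail bounds on their respective failure events. Reading Corollary \ref{Col:1} as a bound on the probability that its conclusion fails, I set
\[
A := \{\, V_\mathrm{c}^\mathrm{U} \le V_\mathrm{s}^\mathrm{L} - \epsilon_1 k \,\}, \qquad P(A) \le \exp\!\left(\tfrac{-k\epsilon_1^2}{2}\right),
\]
and likewise from Corollary \ref{Col:2},
\[
B := \{\, V_\mathrm{s}^\mathrm{U} \le V_\mathrm{c}^\mathrm{L} - \epsilon_2 k \,\}, \qquad P(B) \le \exp\!\left(\tfrac{-k\epsilon_2^2}{2}\right).
\]
These are exactly the complements of the high-confidence conclusions already established, so no further estimation of the sampling process is needed at this stage.

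Second, I would exploit the additive identity $V^\mathrm{U} = V_\mathrm{s}^\mathrm{U} + V_\mathrm{c}^\mathrm{U}$ to turn the two one-sided bounds into a bound on the sum. The key deterministic observation is that if neither $A$ nor $B$ occurs, then simultaneously $V_\mathrm{c}^\mathrm{U} > V_\mathrm{s}^\mathrm{L} - \epsilon_1 k$ and $V_\mathrm{s}^\mathrm{U} > V_\mathrm{c}^\mathrm{L} - \epsilon_2 k$; adding these two strict inequalities gives $V^\mathrm{U} > V_\mathrm{s}^\mathrm{L} + V_\mathrm{c}^\mathrm{L} - (\epsilon_1 + \epsilon_2) k$. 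Taking the contrapositive of this implication shows the target (bad) event is contained in the union of the two failure events:
\[
\{\, V^\mathrm{U} \le V_\mathrm{s}^\mathrm{L} + V_\mathrm{c}^\mathrm{L} - (\epsilon_1 + \epsilon_2) k \,\} \subseteq A \cup B .
\]
Monotonicity of probability together with the union bound then yields $P(A \cup B) \le P(A) + P(B)$, which upon substituting the two tail bounds is precisely Equation \eqref{EQ:Lemma2}.

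The step I expect to be the main subtlety is the handling of correlation rather than any computation. The preceding discussion warns that the $2k$ pulses are \emph{not} independent, since the beam splitter couples each L pulse to its partner U pulse, and indeed the failure events $A$ and $B$ may be statistically coupled in a way I cannot control. The crucial point to emphasize is that this coupling is harmless here: the union bound $P(A \cup B) \le P(A) + P(B)$ holds for \emph{any} joint distribution and uses only the two marginal tail estimates supplied by Corollaries \ref{Col:1} and \ref{Col:2}. Because I never multiply probabilities and never invoke independence, the correlation that blocks a na\"{i}ve direct application of the random sampling theorem to $V^\mathrm{U}$ does not enter the argument at all; only the deterministic containment above and the two marginal bounds are needed.
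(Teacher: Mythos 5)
Your proposal is correct and follows essentially the same route as the paper's own proof in \ref{App:Confidence_Cross}: the paper likewise combines the two tail bounds from Corollaries \ref{Col:1} and \ref{Col:2} via the observation that the event $\{V^\mathrm{U}\le V_\mathrm{s}^\mathrm{L}+V_\mathrm{c}^\mathrm{L}-(\epsilon_1+\epsilon_2)k\}$ implies at least one of the two individual failure events, and then applies the union bound, which requires no independence assumption. Your explicit remark that the U/L correlations are harmless because only marginal bounds and a deterministic containment are used is exactly the (implicit) reason the paper's argument survives the correlation caveat raised in the surrounding text.
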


\begin{proof}
See \ref{App:Confidence_Cross}.
\end{proof}

In real experiment, it is convenient to count \emph{all} the
untagged L pulses, defined as variable
$V^\mathrm{L}(=V_\mathrm{s}^\mathrm{L}+V_\mathrm{c}^\mathrm{L})$.
Can we estimate $V^\mathrm{U}$ directly from $V^\mathrm{L}$?

\begin{prop}
Consider $k$ pulses sent from an unknown and untrusted source to
Alice. Alice randomly assigns each input pulse as either a sampling
pulse or a coding pulse with equal probabilities. The probability
that $V^\mathrm{U} \le V^\mathrm{L} - \epsilon k$ satisfies:
\begin{equation}\label{EQ:Prop1}
    P(V^\mathrm{U} \le V^\mathrm{L} - \epsilon k) \le
    2\exp(\frac{-k\epsilon^2}{4})
\end{equation}
That is, Alice can conclude that $V^\mathrm{U} > V^\mathrm{L} -
\epsilon k$ with confidence level
\begin{equation}\label{EQ:Prop1_ConfidenceLevel}
    \tau > 1-2\exp(\frac{-k\epsilon^2}{4})
\end{equation}

\end{prop}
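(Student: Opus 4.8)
The plan is to bound the unobservable quantity $V^\mathrm{U}$ from below by the observable $V^\mathrm{L}$, exploiting the two one-sided cross-estimates already in hand. Writing $V^\mathrm{U}=V_\mathrm{c}^\mathrm{U}+V_\mathrm{s}^\mathrm{U}$ and $V^\mathrm{L}=V_\mathrm{s}^\mathrm{L}+V_\mathrm{c}^\mathrm{L}$, Corollary~\ref{Col:1} controls $V_\mathrm{c}^\mathrm{U}$ against $V_\mathrm{s}^\mathrm{L}$ and Corollary~\ref{Col:2} controls $V_\mathrm{s}^\mathrm{U}$ against $V_\mathrm{c}^\mathrm{L}$, so on the intersection of the two favourable events one immediately gets $V^\mathrm{U}>V^\mathrm{L}-(\epsilon_1+\epsilon_2)k$. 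First I would record this union-bound route as a sanity check, but I expect it to be too lossy to reach the stated constant: forcing $\epsilon_1+\epsilon_2=\epsilon$ and optimizing gives $\epsilon_1=\epsilon_2=\epsilon/2$ and hence only $2\exp(-k\epsilon^2/8)$, which is weaker than the claimed $2\exp(-k\epsilon^2/4)$. This tells me the proposition cannot be obtained by merely splitting the error budget between the two corollaries, and that a single, direct concentration estimate for the combined deficit $V^\mathrm{L}-V^\mathrm{U}$ is needed.

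Accordingly, the main step is to treat $V^\mathrm{L}-V^\mathrm{U}$ as one centered quantity and apply a single Azuma--Hoeffding bound. I would build a Doob (exposure) martingale by revealing, one pulse at a time, the random sampling/coding assignment together with the U-- and L--untagged indicators, and track the conditional expectation of $V^\mathrm{L}-V^\mathrm{U}$. Because each pulse contributes a bounded amount to both $V^\mathrm{U}$ and $V^\mathrm{L}$, re-exposing a single pulse moves the martingale by an increment bounded by a constant; summing the squared increments over the $2k$ daughter pulses and applying Azuma's inequality then produces a sub-Gaussian tail of order $\exp(-k\epsilon^2/4)$. The martingale formulation is the right tool here precisely because it does not require the U and L pulses to be independent, whereas the random sampling bound of Lemma~\ref{Lemma:ActiveConfidenceLevel} does require such independence and therefore cannot be invoked directly.

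I would then pin down two quantitative inputs. The first is the drift: the conditional mean of each increment must be non-positive, i.e.\ the expected untagged U count is at least the expected untagged L count, which follows from the matched per-pulse untagged probabilities guaranteed by the hybrid-scheme equivalence ($q'\lambda'=q\lambda$) underlying Corollaries~\ref{Col:1} and~\ref{Col:2}. The second is the range of each increment, which fixes the constant in the exponent. The factor $2$ in front is then most naturally produced either by a two-sided application of the inequality or by separately controlling the sampling-side deficit $V_\mathrm{s}^\mathrm{L}-V_\mathrm{c}^\mathrm{U}$ and the coding-side deficit $V_\mathrm{c}^\mathrm{L}-V_\mathrm{s}^\mathrm{U}$ and combining them, each contributing one copy of $\exp(-k\epsilon^2/4)$.

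The hard part will be the correlation between the U pulse and the L pulse that originate from the same input pulse. After phase randomization each input pulse is a number state, so the total photon number is a fixed constant and the two daughter pulses are anticorrelated rather than independent; this is exactly the ``restricted sampling, one sample per pair'' difficulty flagged before the statement. I expect the crux of the argument to be showing that the bounded-difference and drift estimates survive this pairing and, crucially, that the resulting bound is independent of the total number of untagged pulses $V$ and of the adversarial, arbitrary input photon-number distribution, mirroring the $V$-independence already emphasized for Lemma~\ref{Lemma:ActiveConfidenceLevel}.
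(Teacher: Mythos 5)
Your opening ``sanity check'' is in fact the paper's entire proof: Proposition 1 is derived there by setting $\epsilon_1=\epsilon_2=\epsilon/2$ in Lemma 2, i.e.\ precisely the error-budget split that you considered and then rejected. Your arithmetic for that route is correct, and that is the interesting point: $\exp(-k\epsilon_1^2/2)+\exp(-k\epsilon_2^2/2)$ with $\epsilon_1=\epsilon_2=\epsilon/2$ equals $2\exp(-k\epsilon^2/8)$, whereas the paper asserts that this same substitution ``reduces to'' the bound $2\exp(-k\epsilon^2/4)$ of Equation \eqref{EQ:Prop1}. So what you have found is not evidence that a stronger hidden argument must exist; it is an inconsistency in the paper itself. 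As written, the paper's union-bound proof only supports the exponent $-k\epsilon^2/8$, and this propagates to the formula $\epsilon_\mathrm{p}=\sqrt{-4\ln[(1-\tau)/2]/k}$ used in the finite-data-size section, which would likewise need adjustment if one keeps that proof.

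Your direct concentration argument is a genuinely different route, and it is the natural way to actually obtain (indeed improve on) the stated constant. The clean version: write $V^\mathrm{L}-V^\mathrm{U}=\sum_{i=1}^k Z_i$ with $Z_i=X_i^\mathrm{L}-X_i^\mathrm{U}$, where $X_i^\mathrm{L},X_i^\mathrm{U}$ are the untagged indicators of the two daughters of input pulse $i$. In the hybrid picture both daughters see the same effective attenuation $q'$, so $X_i^\mathrm{L}$ and $X_i^\mathrm{U}$ have identical marginals for every $i$, whatever the adversarial photon number $m_i$; hence $E[Z_i]=0$ exactly (the conservative interval $\varsigma$ only lowers the L-side count, which is the favourable direction). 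Conditioned on $(m_1,\dots,m_k)$ the $Z_i$ are independent across $i$ and each lies in $[-1,1]$, so Hoeffding --- equivalently your Doob martingale, provided you expose one \emph{pair} at a time --- gives $P(V^\mathrm{U}\le V^\mathrm{L}-\epsilon k)\le\exp(-k\epsilon^2/2)$, uniformly in $V$ and in the input photon-number distribution, which is strictly stronger than Equation \eqref{EQ:Prop1}. Two corrections to your sketch: do not expose the $2k$ daughters separately, since revealing $X_i^\mathrm{L}$ alone also moves $E[X_i^\mathrm{U}\mid X_i^\mathrm{L}]$ through the anticorrelation and inflates the increment bound, degrading the constant; and no drift \emph{inequality} between U- and L-side untagged probabilities needs to be argued, because the hybrid-scheme equivalence gives equality of marginals outright. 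The within-pair correlation you flag as ``the hard part'' is in fact harmless for this argument: bounded per-pair differences plus conditional independence across pairs is all that Hoeffding/Azuma requires, and no factor of $2$ ever appears.
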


\begin{proof}
This is a natural conclusion from Lemma 2. Note that
$V^\mathrm{L}=V_\mathrm{s}^\mathrm{L}+V_\mathrm{c}^\mathrm{L}$. If
Alice chooses $\epsilon_1=\epsilon_2=\epsilon/2$, Equation
\eqref{EQ:Lemma2} reduces to Equation \eqref{EQ:Prop1}.
\end{proof}

Once the number of untagged bits that are sent to Bob is estimated,
the final key generation rate can be calculated
\cite{Security:UntrustedSource}.

\section{Numerical Simulation}\label{Sec:NumericalSimulation}

We performed numerical simulation to test the efficiencies of the active
and passive estimates. Here, we define the key generation rate as
secure key bits per pulse sent by the \emph{source}, which may be
controlled by an eavesdropper. This is different from the definition
used in \cite{Security:UntrustedSource}, where the key generation
rate is defined as secure key bit per pulse sent by \emph{Alice}.
Note that, in the passive scheme, \emph{all} the pulses sent by the
source are sent from Alice to Bob, while in active scheme, only
\emph{half} of the pulses sent by the source are sent from Alice to
Bob. Therefore, for the same set-up, we can expect the key
generation rate suggested by the passive scheme to be roughly twice
as high as that by the active scheme. However, the equivalent input
photon number in the passive scheme is lower than that of the active
scheme, which introduces a competing factor. The comparison between
passive and active estimates is discussed in following sections.

\subsection{Simulation Techniques}

The simulation technique in this paper is similar to that presented in
\cite{Security:UntrustedSource} with a few improvements. Here we briefly reiterate it:
First, we simulate the experimental outputs based on the parameters
reported by \cite{GYS}, which are shown in Table
\ref{Tab:simulation_parameter}. At this stage, we assume that the
source is Poissonian with an average output photon number $M$. For a QKD setup with channel transmittance $\eta(=e^{-\alpha l}$,
where $\alpha$ is the fiber loss coefficient, and $l$ is the fiber length
between Alice and Bob), Bob's quantum detection efficiency $\eta_\text{Bob}$,
detector intrinsic error rate $e_\text{det}$ and background rate
$Y_0$, the gain \cite{Gain} and the QBER of the signals are expected to be
\cite{Decoy:Practical}
\begin{equation}\label{eq:qmuemu}
    \begin{aligned}
    Q_e &= Y_0+1-\exp(-\eta\eta_\text{Bob}M\lambda),\\
    E_e &=
\frac{e_0Y_0+e_\text{det}[1-\exp(-\eta\eta_\text{Bob}M\lambda)]}{Q_e},
    \end{aligned}
\end{equation} respectively. Here $Q_e$ and $E_e$ refer to the experimentally measured overall properties rather than the properties of the untagged bits.  Second, we calculate the secure key generation rate. The general expression of secure key generation rate per pulse sent by Alice is given by \cite{GLLP, Decoy:LoPRL}
\begin{equation}\label{eq:GeneralR}
R\ge \frac{1}{2}[-Q_ef(E_e)H_2(E_e)+Q_1(1-H_2(e_1))],
\end{equation}
where $f(\ge 1)$ is the bi-directional error correction inefficiency ($f=1$ iff the error correction procedure achieves the Shannon limit), $H_2$ is the binary Shannon entropy, $Q_1$ is the gain of the single photon state in untagged bits, and $e_1$ is the QBER of the single photon state in untagged bits. $Q_e$ and $E_e$ can be experimentally measured. Here, we use Equations \eqref{eq:qmuemu} to simulate the experimental outputs.

$Q_1$ and $e_1$ need to be estimated. Here, we use the method described in \ref{App:Security}. The key assumption for decoy state QKD with an untrusted source is that $Y_{m,n}$ is identical for different states, and so is $e_{m,n}$ \cite{Security:UntrustedSource}. Here $Y_{m,n}$ is the conditional
probability that Bob's detectors click given that this bit enters
Alice's lab with photon number $m$ and emits from Alice's lab with
photon number $n$, and $e_{m,n}$ is the QBER of bits with $m$ input
photons and $n$ output photons.

At the second stage, we do not
make any assumption about the source. That is, Alice and Bob have to
characterize the source from the experimental output. Note that we need to set the values of $\lambda$ and $\delta$ (recall that all untagged bits have input photon numbers
$m\in[(1-\delta)M,(1+\delta)M]$, where $\delta$ is a small positive real
number, $M$ is a large positive integer, and both $\delta$ and $M$
are chosen by Alice and Bob). It is preferable to set $\lambda$ and $\delta$ to the values that yield the highest final key generation rate. We optimize the values of $\lambda$ and $\delta$ numerically by exhaustive search. Moreover, in the simulation of decoy state QKD with a finite data size, we also need to optimize the portion of each state.

As a clarification, our security analysis does \emph{not} require
any additional assumptions of the source to analyze
\emph{experimental} outputs.

An important improvement is that the value of $\delta$
 is optimized at all distances in the following simulations, while
$\delta$ is set to be constant in \cite{Security:UntrustedSource}.
This is because for different channel losses, the optimal value of
$\delta$ can vary. Moreover, several important practical factors are
considered, including the unique characteristic of plug \& play
structure, intensity monitor imperfections, and finite data size.

\begin{figure}\center
  \includegraphics[width=9cm]{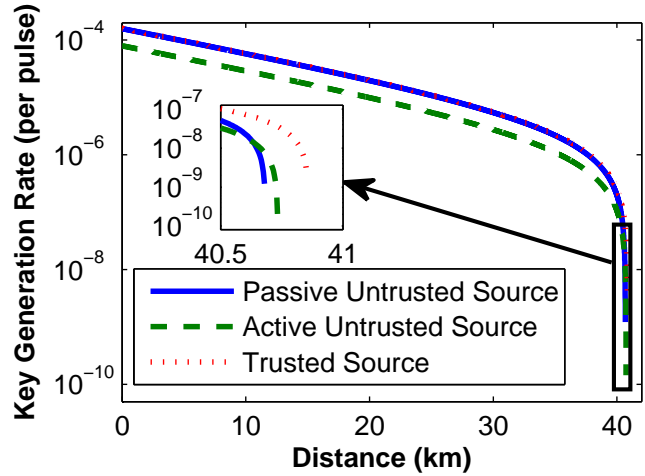}\\
  \caption{Simulation result of GLLP \cite{GLLP} protocol with infinite
  data size, symmetric beam splitter, perfect intensity monitor, and uni-directional structure.
  We assume that the source is Poissonian centered at $M=10^6$
  photons per pulse, and the beam splitting ratio $q=0.5$.
  Citing experimental parameters from Table \ref{Tab:simulation_parameter}. We calculated the ratio
  of the key generation rate with an untrusted source over that with a trusted
source. For the passive estimate scheme, the ratios are 98.4\%, 98.1\%,
and 79.8\% at 1 km, 20 km, and 40 km, respectively. For the active
estimate scheme, the ratios are 49.4\%, 49.3\%, and 42.8\% at 1 km,
20 km, and 40 km, respectively.}\label{FIG:GLLP_GYS_PerfectIM}
\end{figure}

For ease of calculation, similar to in
\cite{Security:UntrustedSource}, we approximate the Poisson
distribution as a Gaussian distribution centered at $M$ with
variance $\sigma^2=M$. This is an excellent approximation because
$M$ is very large ($10^3$ or larger) in all the simulations
presented below.

There are various types of imperfections and errors. We will consider them one by one in the following sections. In Section \ref{se:biasBS}, we consider the asymmetry of the beam splitter. In Section \ref{se:pnp}, we consider the source attenuation introduced by the bi-directional scheme. In Section \ref{Sec:SimImperfectIM}, we consider the inefficiency and the inaccuracy of the intensity monitor. In Section \ref{se:finite}, we consider the statistical fluctuation due to a finite data size.

\subsection{Infinite Data Size with Perfect Intensity Monitor}
In the asymptotic case, Alice sends infinitely many bits to Bob (i.e.,
$k\rightarrow \infty$). Therefore we can set $\epsilon\rightarrow0$
while still having $\tau\rightarrow 1$.

We assume that the intensity monitor is efficient and noiseless.
Similarly to the case in \cite{Security:UntrustedSource}, we set $M=10^6$.
Moreover, we set $q=0.5$ as 50/50 beam splitter is widely used in
many applications.

\begin{table}[!b]
  \center
  \caption{Simulation Parameter from GYS \cite{GYS}.}\label{Tab:simulation_parameter}
    \begin{tabular}{c c c c}
      \hline
      $\eta_\mathrm{det}$ & $\alpha$ & $Y_0$ & $e_\mathrm{det}$ \\
      4.5\% & 0.21dB/km & $1.7\times 10^{-6}$ & 3.3\% \\
      \hline
    \end{tabular}
\end{table}

The simulation results of the GLLP protocol \cite{GLLP}, Weak+Vacuum
decoy state protocol \cite{Decoy:Practical}, and One-Decoy protocol
\cite{Decoy:Practical} are shown in Figure
\ref{FIG:GLLP_GYS_PerfectIM}, Figure \ref{FIG:WV_GYS_PerfectIM}, and
Figure \ref{FIG:OneDecoy_GYS_PerfectIM}, respectively. We can see
that the key generation rate of the passive estimate scheme on an untrusted
source is very close to that on  a trusted source, while the key
generation rate of the active estimate scheme is roughly $1/2$ of that
of the passive scheme. This is expected because in the active scheme, only
half of the pulses generated by the source are sent to Bob, whereas
in the passive scheme, all the pulses generated by the source are sent
to Bob. Note that, in the asymptotic case, the efficiency of the active
estimate scheme can be doubled by sending most pulses
(asymptotically all the pulses) to Bob. In this case, there are
still infinitely many pulses sent to the Intensity Monitor.

\begin{figure}\center
  \includegraphics[width=9cm]{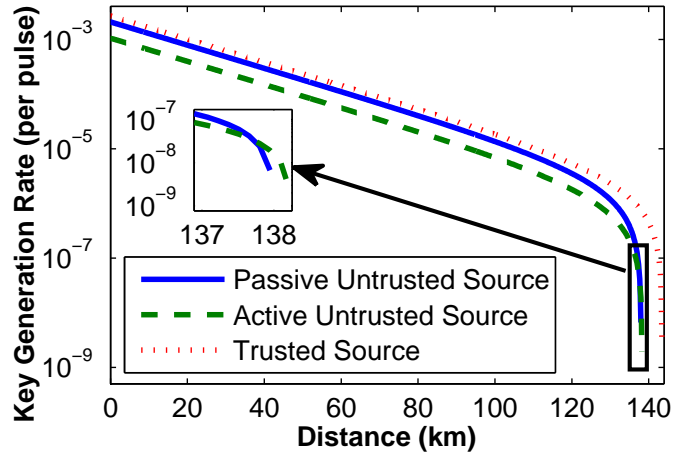}\\
  \caption{Simulation result of Weak+Vacuum \cite{Decoy:Practical} protocol with infinite
  data size, symmetric beam splitter, perfect intensity monitor, and uni-directional structure.
   We assume that the source is Poissonian centered at $M=10^6$
  photons per pulse, and the beam splitting ratio $q=0.5$.
  Citing experimental parameters from Table \ref{Tab:simulation_parameter}. We calculated the ratio
  of the key generation rate with an untrusted source over that with a trusted
source. For the passive estimate scheme, the ratios are 77.7\%, 77.1\%,
and 73.8\% at 1 km, 50 km, and 100 km, respectively. For the active
estimate scheme, the ratios are 39.2\%, 39.0\%, and 37.4\% at 1 km,
50 km, and 100 km, respectively.}\label{FIG:WV_GYS_PerfectIM}
\end{figure}

For ease of discussion, in passive estimate scheme, we define
untagged bits as bits with input photon number
$m_\mathrm{p}\in[(1-\delta_\mathrm{p})M_\mathrm{p},(1+\delta_\mathrm{p})M_\mathrm{p}]$,
while in active estimate scheme, we define untagged bits as bits
with input photon number
$m_\mathrm{a}\in[(1-\delta_\mathrm{a})M_\mathrm{a},(1+\delta_\mathrm{a})M_\mathrm{a}]$.
Here $\delta_\mathrm{p}$ and $\delta_\mathrm{a}$ are small positive real
numbers chosen by Alice and Bob, and $M_\mathrm{p}$ and
$M_\mathrm{a}$ are large positive integers chosen by Alice and Bob.
In the passive estimate scheme, we define the maximum possible tagged
ratio as $\Delta_\mathrm{p}$. In active estimate scheme, we define
the maximum possible tagged ratio as $\Delta_\mathrm{a}$. Here the
tagged ratio is defined as the ratio of the number of tagged bits
over the number of all the bits sent to Bob.

By magnifying the tails at long distances (shown in the insets of
Figures
\ref{FIG:GLLP_GYS_PerfectIM} -- \ref{FIG:OneDecoy_GYS_PerfectIM}), we
can see that the active schemes suggest higher key generation rate than
the passive schemes do in all three protocols. This behavior is related
to the following fact: In the passive estimate scheme, the
equivalent input photon number is lower than that of the active
estimate scheme. This is because the input photon number is defined
as the photons counted by the intensity monitor, and only a portion
of an input pulse is sent to the intensity monitor in the passive
scheme. Compared to the active scheme, lower input photon number in the
passive scheme leads to a larger coefficient of variation of measured
input photon number distribution, assuming the source is Poissonian.
Therefore, for the same source, if one set
$\delta_\mathrm{p}=\delta_\mathrm{a}$, $\Delta_\mathrm{p}$ will be
greater than $\Delta_\mathrm{a}$ \footnote{The values of $\delta$ in the
passive estimate and the  active estimate schemes are optimized separately in our
simulation. The optimal value of $\delta_\text{p}$ usually deviates
from the optimal value of $\delta_\text{a}$ with the same
experimental parameters. Here we cite
``$\delta_\text{p}=\delta_\text{a}$'' just to illustrate an
intuitive understanding of the phenomena shown in the insets of Figures 
\ref{FIG:GLLP_GYS_PerfectIM} -- \ref{FIG:OneDecoy_GYS_PerfectIM}.}.
Increasing the coefficient of variation of the measured input photon number
distribution will in general deteriorate the efficiency of the estimate
for QKD with untrusted sources. Take two extreme cases for example:
If the coefficient of variation is very large, which means the input
photon number distribution is almost a uniform distribution, then
the estimate efficiency will be very poor because either $\delta$ or
$\Delta$ (or both) will be very large. If the coefficient of
variation is very small, which means the input photon number
distribution is almost a delta-function, then the estimate
efficiency will be very good because both $\delta$ and $\Delta$ can
be very small.

\begin{figure}\center
  \includegraphics[width=9cm]{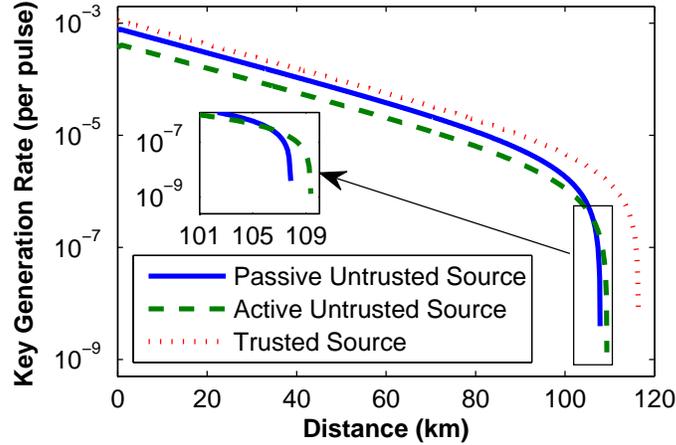}\\
  \caption{Simulation result of One-decoy \cite{Decoy:Practical}
  protocol with infinite
  data size, symmetric beam splitter,
  perfect intensity monitor, and uni-directional structure.
   We assume that the source is Poissonian centered at $M=10^6$
  photons per pulse, and the beam splitting ratio $q=0.5$.
  Citing experimental parameters from Table \ref{Tab:simulation_parameter}. We calculated the ratio
  of the key generation rate with an untrusted source over that with a trusted
source. For the passive estimate scheme, the ratios are 71.5\%, 68.6\%,
and 39.5\% at 1 km, 50 km, and 100 km, respectively. For the active
estimate scheme, the ratios are 38.0\%, 36.7\%, and 24.4\% at 1 km,
50 km, and 100 km, respectively.}\label{FIG:OneDecoy_GYS_PerfectIM}
\end{figure}

The estimate of the gain of untagged bits is very sensitive to the
value of $\Delta$, especially when the experimental measured overall
gain is small (i.e., when the distance is long, which corresponds to
the tails of Figures
\ref{FIG:GLLP_GYS_PerfectIM} -- \ref{FIG:OneDecoy_GYS_PerfectIM}). The
estimate of untagged bits' gain is discussed in Section III of
\cite{Security:UntrustedSource}. Here we briefly recapitulate the
main idea: Alice cannot in practice perform a quantum non-demolition
measurement on the photon numbers of input pulses. Therefore, Alice
and Bob do not know which bits are tagged and which are untagged,
though they can estimate the minimum number of untagged bits.
Without knowing which bits are untagged, Alice and Bob cannot
measure the exact gain $Q$ of untagged bits. Alice and Bob can only
experimentally measure the overall gain $Q_e$, which contains
contributions from both tagged bits and untagged bits.

Alice and Bob can still estimate the upper and lower bounds of $Q$.
They can first estimate the maximum tagged ratio $\Delta$. This
estimate can be obtained either actively as proposed in
\cite{Security:UntrustedSource}, or passively as discussed in this
paper. Alice and Bob can then estimate the upper and lower bounds of
$Q$ as follows \cite{Security:UntrustedSource}:
\begin{equation}\label{EQ:Q_Bound}
    \begin{aligned}
        \overline{Q} &= \frac{Q_e}{1-\Delta-\epsilon},\\
        \underline{Q} &=
        \max\left(0,\frac{Q_e-\Delta-\epsilon}{1-\Delta-\epsilon}\right);
    \end{aligned}
\end{equation}

\begin{figure}\center
\subfigure[$q=0.5$]{
\includegraphics[width=5cm]{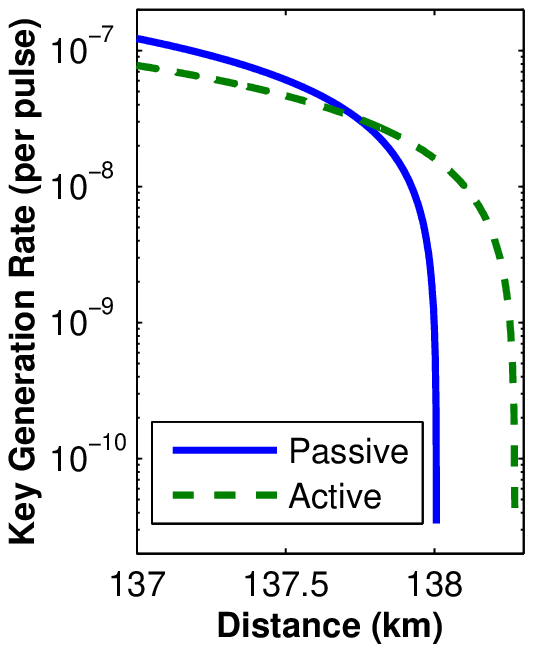}}
\subfigure[$q=0.01$]{
\includegraphics[width=4.85cm]{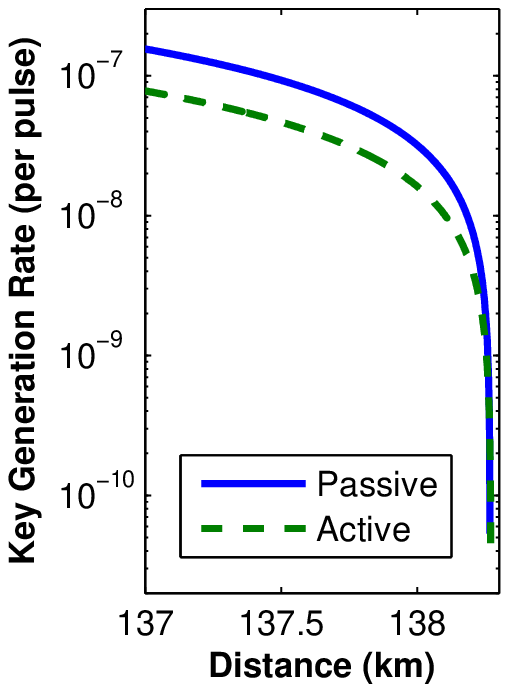}}\\
  \caption{Simulation results for Weak+Vacuum protocol
  \cite{Decoy:Practical} with different beam splitters
  for passive estimate.
   We assume that the data size is infinite, the intensity
  monitor is perfect, the source is Poissonian centered at $M=10^6$
  photons per pulse,
  and the system is in uni-directional structure.
  Citing experimental parameters from Table \ref{Tab:simulation_parameter}.
  The results are focused at
  the maximum transmission distance to illustrate the
  improvement of passive
  estimate by using a biased beam splitter that sends more
  photons into the intensity
  monitor. This is equivalent to increasing input photon numbers in passive
  scheme.}\label{FIG:Different_q}
\end{figure}

$\underline{Q}$ is very sensitive to $\Delta$ when $Q_e$ is small.
Therefore, when the distance is long (which corresponds the 
tails of Figures
\ref{FIG:GLLP_GYS_PerfectIM} -- \ref{FIG:OneDecoy_GYS_PerfectIM}),
$Q_e$ becomes very small, and $\underline{Q}$ will then be very
sensitive to $\Delta$. Since $\Delta_\mathrm{p}>\Delta_\mathrm{a}$,
the passive estimate becomes less efficient than the active estimate
in this case.

On the other hand, in short distances, $Q_e$ is significantly
greater than $\Delta_\mathrm{p}$ and $\Delta_\mathrm{a}$, therefore
the difference between $\Delta_\mathrm{p}$ and $\Delta_\mathrm{a}$
makes a negligible contribution to the performance difference between
the passive and active estimates. At short distances, it is the
following fact that dominates the performance difference between
these two schemes: The passive estimate scheme can send Bob twice as many
pulses as the active estimate scheme can.

One can increase $\delta$ to decrease $\Delta_\mathrm{p}$. That is,
if one intends to ensure that $\Delta_\mathrm{p}=\Delta_\mathrm{a}$,
one has to set $\delta_\mathrm{p}>\delta_\mathrm{a}$. However,
increasing $\delta$ also has negative effect on the key generation
rate. This is discussed in Section III \& IV of
\cite{Security:UntrustedSource}.

In brief, lower input photon number is the reason why the passive
estimate suggests lower key generation rate than the active estimate
does around maximum transmission distances in all of the three
simulated protocols. This will be confirmed in the simulation
presented in Section \ref{se:biasBS} --
\ref{se:finite}.

\begin{figure}\center
  \includegraphics[width=9cm]{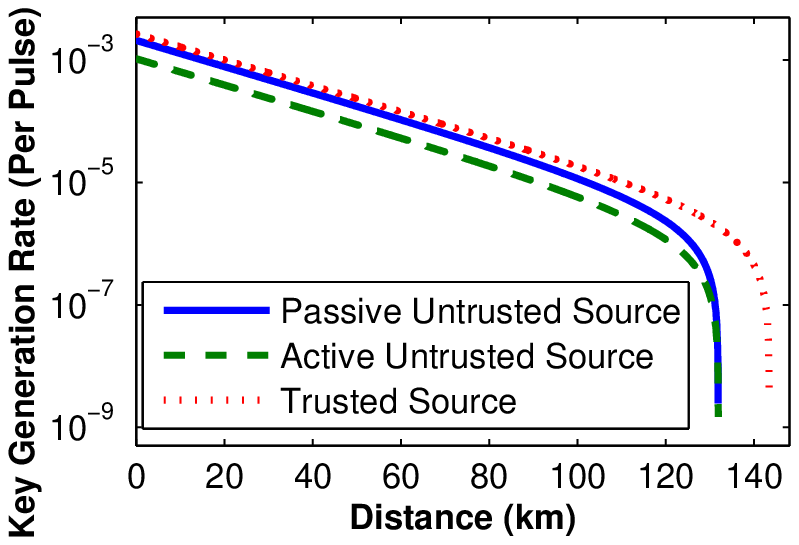}\\
  \caption{Simulation result of Weak+Vacuum \cite{Decoy:Practical}
  protocolwith infinite
  data size, asymmetric beam splitter, perfect intensity monitor,
  and \emph{bi-directional structure}.
   We assume
  that the source in Bob's lab is Poissonian centered at $M_\mathrm{B}=10^6$
  photons per pulse, and the beam splitting ratio $q=0.01$.
  Citing experimental parameters from Table \ref{Tab:simulation_parameter}.
  We calculated the ratio
  of the key generation rate with an untrusted source over that with a trusted
source. For passive estimate scheme, the ratios are 78.5\%, 75.0\%,
and 63.0\% at 1 km, 50 km, and 100 km, respectively. For active
estimate scheme, the ratios are 39.2\%, 37.5\%, and 31.5\% at 1 km,
50 km, and 100 km, respectively. Comparing with Figure
\ref{FIG:WV_GYS_PerfectIM}, we can see that the bi-directional
nature of Plug \& Play set-up reduced the efficiencies of both
active and passive estimates on an untrusted
source.}\label{FIG:WV_GYS_PerfectIM_PnP}
\end{figure}

\subsection{Biased Beam Splitter}\label{se:biasBS}

A natural measure to improve the efficiency of the passive estimate
is to increase input photon number. Note that in the passive estimate,
as discussed in Section \ref{Sec:ActiveToPassive}, input photon
numbers are the photon numbers counted by the intensity monitor.
Therefore, it can improve the passive estimate's efficiency to send
more photons to the intensity monitor (i.e., setting $q$ smaller).

To test this postulate, we performed another simulation to compare
the performance of the passive estimate with different values of $q$.
Similar to the above subsection, we assume that the intensity
monitor is efficient and noiseless, and data size is infinite.
Therefore $\epsilon=0$. We set $M=10^6$ at the \emph{source}.

The simulation results are shown in Figure \ref{FIG:Different_q}. We can clearly
see that by setting $q$ to a smaller value (1\%), the key generation
rate of the passive estimate scheme is improved around the maximum
transmission distance.

Intuitively, one can improve the efficiency of the active scheme by
sending most pulses to Bob. One can refer to the discussion in 
\ref{App:Confidence_Active} below Equation
\eqref{Eq:SamplingCodingDeviation_General} as a starting point.
Detailed discussion of optimizing the efficiency of the active estimate
scheme is beyond the scope of the current paper and is subject to
further investigation.

\begin{figure}\center
  \includegraphics[width=9cm]{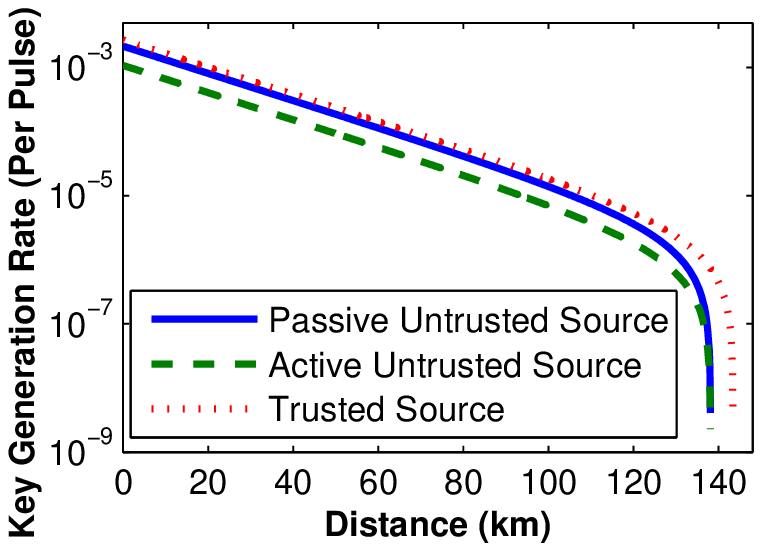}\\
  \caption{Simulation result of Weak+Vacuum \cite{Decoy:Practical} protocol
  with infinite
  data size, asymmetric beam splitter, perfect intensity monitor,
   bi-directional structure, and \emph{a bright light source}.
   We assume
  that the source in Bob's lab is Poissonian centered at $M_\mathrm{B}=\mathbf{10^8}$
  photons per pulse, and the beam splitting ratio $q=0.01$.
  Citing experimental parameters from Table \ref{Tab:simulation_parameter}.
  We calculated the ratio
  of the key generation rate with an untrusted source over that with a trusted
source. For the passive estimate scheme, the ratios are 80.3\%, 79.6\%,
and 75.8\% at 1 km, 50 km, and 100 km, respectively. For the active
estimate scheme, the ratios are 40.1\%, 39.8\%, and 37.9\% at 1 km,
50 km, and 100 km, respectively. Comparing with Figure
\ref{FIG:WV_GYS_PerfectIM_PnP}, we can see that the estimate
efficiencies for both the active and passive schemes are improved by
using a brighter source. }\label{FIG:WV_GYS_PerfectIM_PnP_LargeN}
\end{figure}

\subsection{Plug \& Play Setup}\label{se:pnp}
In the Plug \& Play QKD scheme, the source is located in Bob's lab.
Bright pulses sent by Bob will suffer the whole channel loss before
entering Alice's lab. Therefore, in the Plug \& Play set-up, Alice's
average input photon number is dependent on the channel loss between
Alice and Bob. If the average photon number per pulse at the source
in Bob's lab, $M_\mathrm{B}$, is constant, the average input photon
number per pulse in Alice's lab, $M$, decreases as the channel loss
increases.

Similar to in the above subsection, we assume that the intensity
monitor is efficient and noiseless, and data size is infinite.
Therefore $\epsilon=0$. We set $M_\mathrm{B}=10^6$ at the source in
Bob's lab. We set $q=1\%$ to improve the passive estimate
efficiency.

We clarify that ``distance'' in all the simulations of
bi-directional QKD set-up refers to a one-way distance between Alice
and Bob, \emph{not} a round-trip distance.

The simulation results of Weak+Vacuum protocol
\cite{Decoy:Practical} are shown in Figure
\ref{FIG:WV_GYS_PerfectIM_PnP}. We can see that the bi-directional
nature plug \& play structure clearly deteriorates the performance
at long distances for which the input photon number on Alice's side is
largely reduced. This affects both the passive and active estimates.

A natural measure to improve the performance of the Plug \& Play setup
is to use a brighter source. By setting $M_\mathrm{B}=10^8$ at the
source in Bob's lab, the performances for both passive and active
estimates are improved substantially as shown in Figure
\ref{FIG:WV_GYS_PerfectIM_PnP_LargeN}. Note that subnanosecond
pulses with $\sim10^8$ photons per pulse can be routinely generated
with directly modulated laser diodes.

\begin{figure}\center
  \includegraphics[width=9cm]{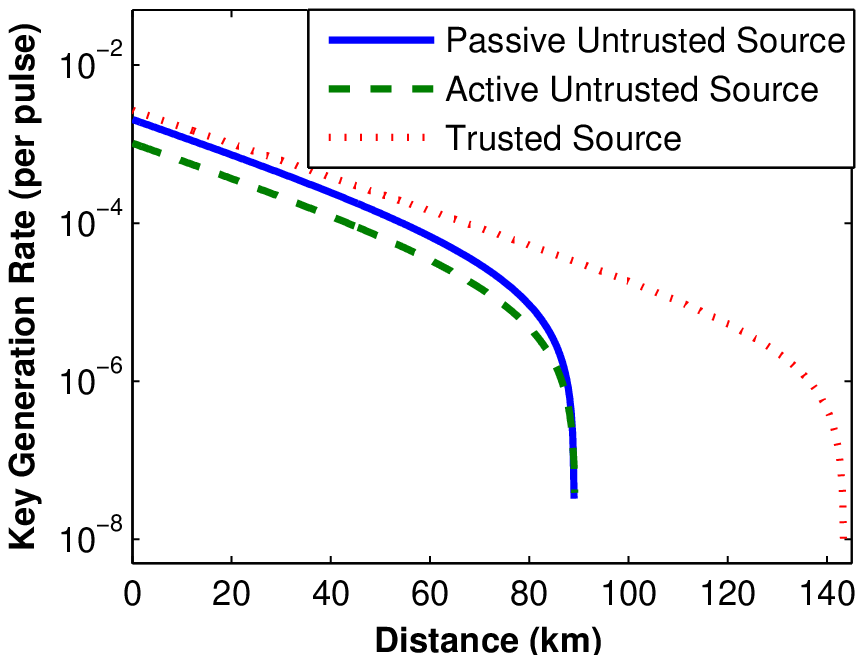}\\
  \caption{Simulation result of Weak+Vacuum \cite{Decoy:Practical}
  protocol with infinite
  data size, asymmetric beam splitter, \emph{imperfect intensity monitor},
  and bi-directional structure.
    We assume
  that the intensity
  monitor efficiency $\eta_\mathrm{IM}=0.7$, the intensity
  monitor noise $\sigma_\mathrm{IM}=10^5$, the intensity monitor
  conservative interval $\varsigma=6\times10^5$, the source in Bob's lab is Poissonian
  centered at $M_\mathrm{B}=10^8$
  photons per pulse, and the beam splitting ratio $q=0.01$.
  Citing experimental parameters from Table \ref{Tab:simulation_parameter}.
  Comparing with Figure \ref{FIG:WV_GYS_PerfectIM_PnP}, we can see that
  the imperfections of the intensity monitor substantially
  reduce the efficiencies of both active and passive estimates.}\label{FIG:WV_GYS_ImPerfectIM_PnP}
\end{figure}

\subsection{Imperfections of the Intensity
Monitor}\label{Sec:SimImperfectIM}

There are two major imperfections of the intensity monitor:
inefficiency and noise. These imperfections are discussed in Section
\ref{Sec:ActiveToPassive}. The inefficiency can be easily modeled as
additional loss in the simulation.

There can be various noise sources, including thermal noise, shot-noise, etc. Here, we consider a simple noise model where a \emph{constant
Gaussian} noise with variance $\sigma_\mathrm{IM}^2$ is assumed.
That is, if $m$ photons enter an efficient but noisy intensity
monitor, the probability that the measured photon number is $m'$
obeys a Gaussian distribution
$$
P_m(m')=\frac{1}{\sigma_\mathrm{IM}\sqrt{2\pi}}\exp[-\frac{(m-m')^2}{2\sigma_\mathrm{IM}^2}].
$$

The measured photon number distribution $P(m')$ has larger variation
than the actual photon number distribution $P(m)$ due to the noise
of the intensity monitor. More concretely, if the actual photon
numbers obeys a Gaussian distribution centered at $M$ with variance
$\sigma^2$, the measured photon numbers also obeys a Gaussian
distribution centered at $M$, but with a variance
$\sigma^2+\sigma_\mathrm{IM}^2$.

As in the previous subsections, we assume that the data size is
infinite. Therefore $\epsilon=0$. We set $M_\mathrm{B}=10^8$ at the
source in Bob's lab. Plug \& Play set-up is assumed. We set $q=1\%$
to improve the passive estimate efficiency. The imperfections of the
intensity monitor are set as follows: the efficiency is set as
$\eta_\mathrm{IM}=0.7$, and the noise is set as
$\sigma_\mathrm{IM}=10^5$ (see experimental parameters in Section
\ref{Sec:SimPKU} and Section \ref{Sec:PreExp}). For ease of
simulation, we assume that the intensity monitor conservative
interval is constant \footnote{The assumption of constant
conservative interval may not precisely describe the inaccuracy of
the intensity monitor in realistic applications. Nonetheless, some
factors, like finite resolution of analog-digital conversion, may 
indeed be constant at different intensity levels. We remark that the noises
of different intensity monitors may vary largely. Detailed
investigation on the intensity monitor noise modeling is beyond the
scope of the current paper.} over different input photon numbers. We set
$\varsigma=6\sigma_\mathrm{IM}=6\times10^5$ to ensure a conservative
estimate.

\begin{figure}[!t]\center
  \includegraphics[width=9cm]{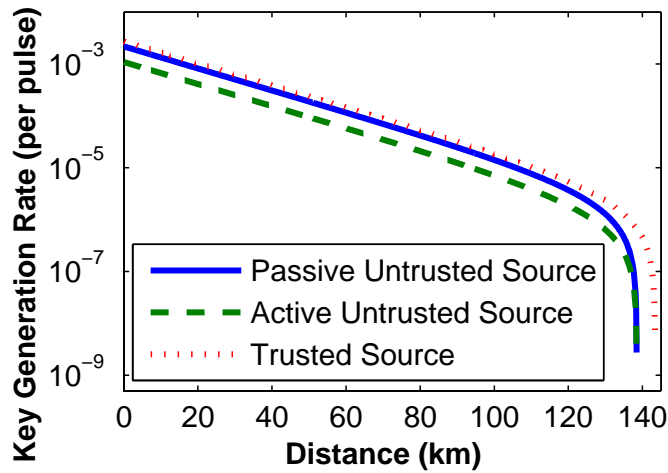}\\
  \caption{Simulation result of Weak+Vacuum \cite{Decoy:Practical}
  protocol with infinite
  data size, asymmetric beam splitter, imperfect intensity monitor,
   bi-directional structure, and a \emph{very bright source}.
    We assume
  that the intensity
  monitor efficiency $\eta_\mathrm{IM}=0.7$, the intensity
  monitor noise $\sigma_\mathrm{IM}=10^5$, the intensity monitor
  conservative interval $\varsigma=6\times10^5$, the source in Bob's lab is Poissonian
  centered at $M_\mathrm{B}=10^{10}$
  photons per pulse, and the beam splitting ratio $q=0.01$.
  Citing experimental parameters from Table \ref{Tab:simulation_parameter}. Comparing
  with Figure \ref{FIG:WV_GYS_ImPerfectIM_PnP}, we can see that using
  a brighter source can effectively improve the efficiencies
  of both passive and
  active estimates. Although it is challenging
to build such bright pulsed laser diodes ($10^{10}$ photons per
pulse with pulse width less than 1 ns) at telecom wavelengths, one
can simply attach a fibre amplifier to the laser diode to generate
very bright pulses. Nonetheless, at such a high intensity level,
non-linear effects in the fibre, like self phase modulation, may be
significant \cite{PassiveEstimate_NonLinear}.
}\label{FIG:WV_GYS_ImPerfectIM_PnP_LargeM}
\end{figure}

The simulation results for Weak+Vacuum protocol
\cite{Decoy:Practical} are shown in Figure
\ref{FIG:WV_GYS_ImPerfectIM_PnP}. We can see that the detector noise
significantly affects the performance of the Plug \& Play QKD system.
This is because at long distances, the bi-directional nature of the Plug
\& Play set-up reduces the input photon number on Alice's side.
Intensity monitor noise and the conservative interval are assumed as
constants regardless of the input photon number in our simulation.
Therefore they become critical issues when the input photon number is
low. As a result, the key generation rate at long distance is
substantially reduced.

\begin{figure}\center
  \includegraphics[width=9cm]{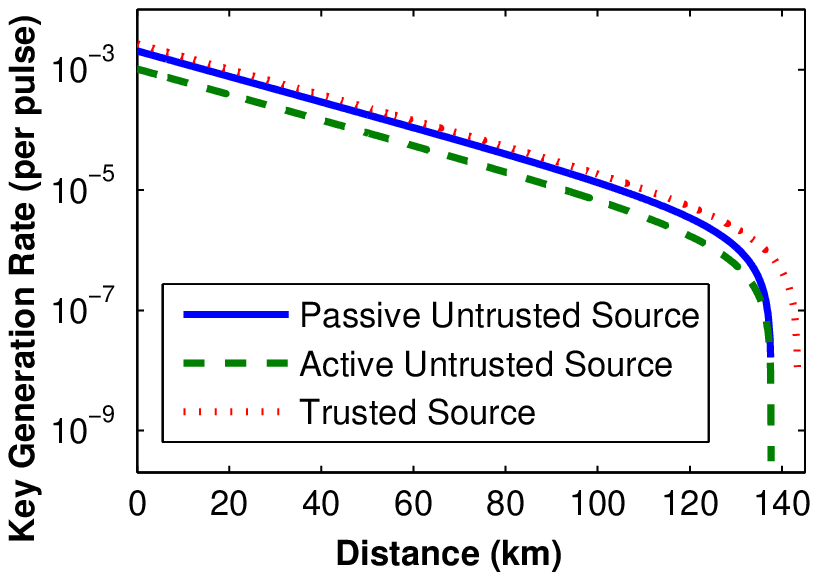}\\
  \caption{Simulation result of Weak+Vacuum \cite{Decoy:Practical}
  protocol with infinite
  data size, asymmetric beam splitter, imperfect intensity monitor,
  and \emph{uni-directional structure}.
    We assume
  that the intensity
  monitor efficiency $\eta_\mathrm{IM}=0.7$, the intensity
  monitor noise $\sigma_\mathrm{IM}=10^5$, the intensity monitor
  conservative interval $\varsigma=6\times10^5$, the source is Poissonian
  centered at $M=10^8$
  photons per pulse, and the beam splitting ratio $q=0.01$.
  Citing experimental parameters from Table \ref{Tab:simulation_parameter}.
  Comparing
  with Figure \ref{FIG:WV_GYS_ImPerfectIM_PnP}, we can see that uni-directional
  structure can effectively improve the efficiencies of both passive and
  active estimates.}\label{FIG:WV_GYS_ImPerfectIM}
\end{figure}

The above postulate is confirmed by the simulations shown in Figure
\ref{FIG:WV_GYS_ImPerfectIM_PnP_LargeM} and Figure
\ref{FIG:WV_GYS_ImPerfectIM}. In Figure
\ref{FIG:WV_GYS_ImPerfectIM_PnP_LargeM}, we assume that the source
in Bob's lab is extremely bright (sending out $10^{10}$ photons per
pulse). We can see clearly that when the input photon number on
Alice's side is high, the key generation rate is only affected
slightly by the imperfections of the intensity monitor. Although it
is challenging to build such bright pulsed laser diodes ($10^{10}$
photons per pulse with pulse width less than 1 ns) at telecom
wavelengths, one can simply attach a fibre amplifier to the laser
diode to generate very bright pulses. Nonetheless, at such a high
intensity level, non-linear effects in the fibre, like self phase
modulation, may be significant \cite{PassiveEstimate_NonLinear}.

An alternative solution is to use the uni-directional setting, in which
the photon number per pulse is constantly high on Alice's side. From
Figure \ref{FIG:WV_GYS_ImPerfectIM} we can see that using the
uni-directional setting can also minimize the negative effects
introduced by the imperfections of the intensity monitor. Nonetheless, if one adopts the uni-directional QKD scheme, one will lose the unique advantages of bi-directional QKD scheme, like the intrinsic stability against the polarization dispersion and the phase drift. Note that adopting the uni-directional scheme does not mean the coherent state assumption is valid. Indeed, even if Alice possesses the source, the source may not be Poissonian and Alice may not have a full characterization of the source without real-time monitoring.

\subsection{Finite Data Size}\label{se:finite}

Real experiments are performed within a limited time, during which the
source can only generate a finite number of pulses. To be consistent
with previous analysis, we assume that the source generates $k$
pulses in an experiment. Reducing the data size from infinite to
finite has two consequences: First, if the confidence level $\tau$
as defined in Equation \eqref{EQ:Prop1_ConfidenceLevel} (for passive
estimate) or in Equation \eqref{EQ:Lemma1_ConfidenceLevel} (for
active estimate) is expected to be close to 1, $\epsilon$ has to be
positive. More concretely, for a fixed $k$, if the estimate on the
untrusted source is expected to have confidence level no less than
$\tau$, one has to pick $\epsilon$ as
$$
\epsilon_\mathrm{p} = \sqrt{-\frac{4\ln(\frac{1-\tau}{2})}{k}}
$$
in the passive estimate scheme, or
$$
\epsilon_\mathrm{a} = \sqrt{-\frac{2\ln(1-\tau)}{k}}
$$
in the active estimate scheme. Second, in decoy state protocols
\cite{Decoy:Practical}, the statistical fluctuations of experimental
outputs have to be considered. The technique to analyze the statistical
fluctuation in decoy state protocols for numerical simulation is
discussed in \cite{Decoy:Practical,Decoy:WangPRA,Decoy:ZhaoISIT}.

\begin{figure}\center
  \includegraphics[width=9cm]{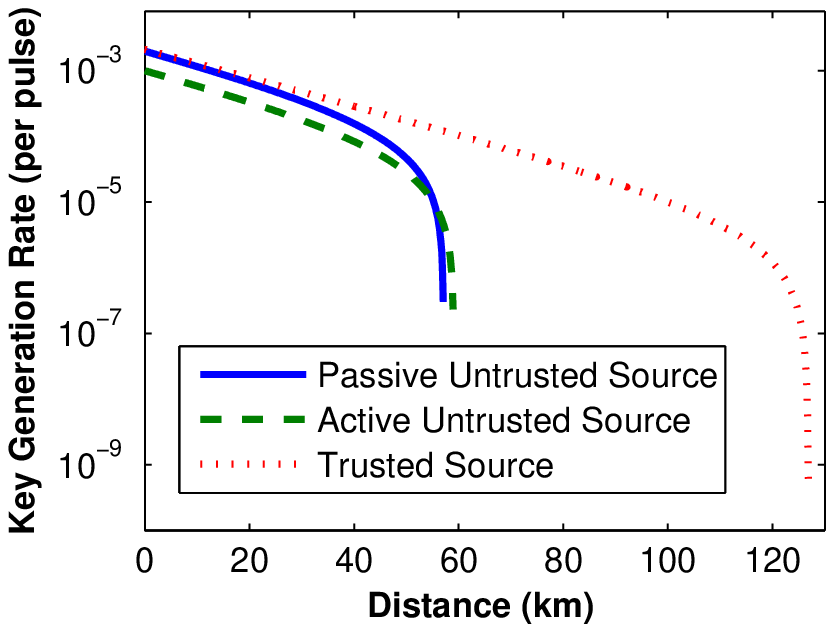}\\
  \caption{Simulation results of the Weak+Vacuum \cite{Decoy:Practical}
  protocol with \emph{finite
  data size}, asymmetric beam splitter, imperfect intensity monitor,
  and bi-directional structure. We assume
  that the data size is $\mathbf{10^{12}}$, the intensity
  monitor efficiency $\eta_\mathrm{IM}=0.7$, the intensity
  monitor noise $\sigma_\mathrm{IM}=10^5$, the intensity monitor
  conservative interval $\varsigma=6\times10^5$, the source in Bob's lab is Poissonian
  centered at $M_\mathrm{B}=10^8$
  photons per pulse, the beam splitting ratio $q=0.01$.
  Confidence level is set as
  $\tau\ge1-10^{-10}$. 6 standard deviations are considered in the statistical
  fluctuation.
  Citing experimental parameters from Table \ref{Tab:simulation_parameter}. Comparing with
  Figure \ref{FIG:WV_GYS_ImPerfectIM_PnP}, we can see that
  finite data size reduces efficiencies of both
  active and passive estimates.}\label{FIG:WV_GYS_ImPerfectIM_PnP_Finite_LargeDataSize}
\end{figure}

In the simulation presented in Figure
\ref{FIG:WV_GYS_ImPerfectIM_PnP_Finite_LargeDataSize}, we assume
that the data size is $10^{12}$ bits (i.e., the source generates
$10^{12}$ pulses in one experiment). This data size is reasonable
for the optical layer of the QKD system because because reliable
gigahertz QKD implementations have been reported in several recent works
\cite{DPSK:200km,ExpFastQKD:MbpsYamamoto,ExpFastQKD:MbpsShields}.
$10^{12}$ bits can be generated within a few minutes in these
gigahertz QKD systems. We set the confidence level as $\tau \ge
1-10^{-10}$, which suggests $\epsilon_\mathrm{a}=6.79\times10^{-5}$
and $\epsilon_\mathrm{p}=9.74\times10^{-5}$. We consider 6 standard
deviations in the statistical fluctuation analysis of Weak+Vacuum
protocol.

As in the previous subsections, we set $M_\mathrm{B}=10^8$ at the
source in Bob's lab. A Plug \& Play set-up is assumed. We set $q=1\%$
to improve the passive estimate efficiency. The imperfections of the
intensity monitor are set as follows: the efficiency is set as
$\eta_\mathrm{IM}=0.7$, and the noise is set constant as
$\sigma_\mathrm{IM}=10^5$. The intensity monitor conservative
interval is set constant as
$\varsigma=6\sigma_\mathrm{IM}=6\times10^5$.

The simulation results for the Weak+Vacuum protocol
\cite{Decoy:Practical} are shown in Figure
\ref{FIG:WV_GYS_ImPerfectIM_PnP_Finite_LargeDataSize}. We can see
that finite data size clearly reduces the efficiencies of both
active and passive estimates. The aforementioned two consequences of
finite data size contribute to this efficiency reduction: First,
$\epsilon$ is non-zero in this finite data size case. Therefore, the
estimate of the lower bound of untagged bits' gain is worse as
reflected in Equation \eqref{EQ:Q_Bound}. Note that $\epsilon$ has
the same weight as $\Delta$ in Equation \eqref{EQ:Q_Bound}. Second,
the statistical fluctuation for  the Weak+Vacuum protocol becomes
important \cite{Decoy:ZhaoISIT}. Moreover, the tightness of bounds
suggested in Lemma 1, Lemma 2, and Proposition 1 may also affect the
estimate efficiency in finite data size.

\begin{figure}\center
  \includegraphics[width=9cm]{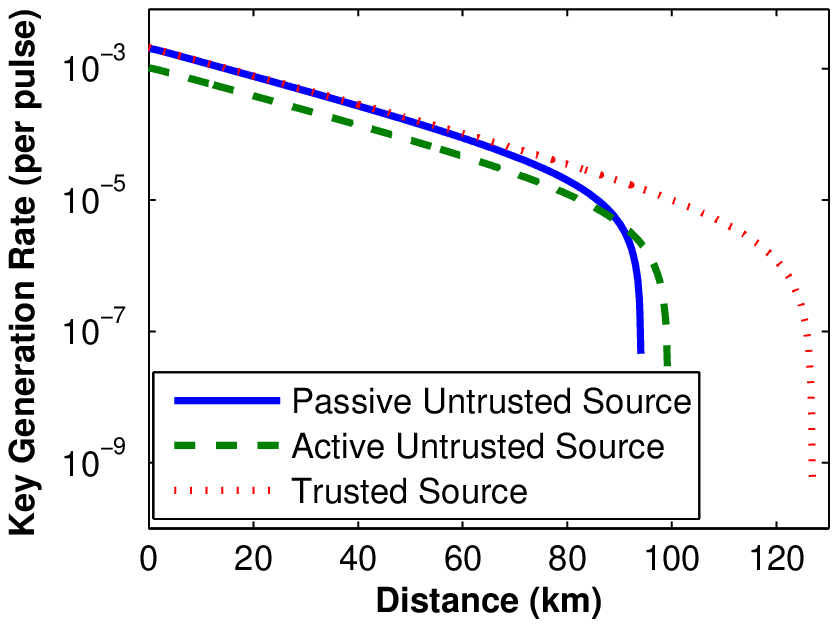}\\
  \caption{Simulation result of Weak+Vacuum \cite{Decoy:Practical}
  protocol with finite
  data size, asymmetric beam splitter, imperfect intensity monitor,
   bi-directional structure, and \emph{very bright source}. We assume
  that the data size is $10^{12}$, the intensity
  monitor efficiency $\eta_\mathrm{IM}=0.7$, the intensity
  monitor noise $\sigma_\mathrm{IM}=10^5$, the intensity monitor
  conservative interval $\varsigma=6\times10^5$, the source in Bob's lab is Poissonian
  centered at $M_\mathrm{B}=\mathbf{10^{10}}$
  photons per pulse, the beam splitting ratio $q=0.01$.
  Confidence level is set as
  $\tau\ge1-10^{-10}$. 6 standard deviations are considered in the statistical
  fluctuation.
  Citing experimental parameters from Table \ref{Tab:simulation_parameter}. Comparing with
  Figure \ref{FIG:WV_GYS_ImPerfectIM_PnP}, we can see that
  using a very bright source can improve efficiencies of both
  active and passive estimates.}\label{FIG:WV_GYS_ImPerfectIM_PnP_Finite_LargeDataSize_LargeM}
\end{figure}

As we showed in Section \ref{Sec:SimImperfectIM}, using a very
bright source can improve the efficiencies of both passive and
active estimates. Here we again adjust the source intensity in Bob's
lab as $M_\mathrm{B}=10^{10}$. The results are shown in Figure
\ref{FIG:WV_GYS_ImPerfectIM_PnP_Finite_LargeDataSize_LargeM}. We can
see that using a very bright source can improve the efficiencies of
both passive and active estimates in finite data size case. As we
mentioned in Section \ref{Sec:SimImperfectIM}, such brightness
($10^{10}$ photons per pulse) is achievable with a pulsed laser diode
and a fibre laser amplifier. However, non-linear effects should be
carefully considered \cite{PassiveEstimate_NonLinear}.

In future studies, it would be worthwhile to incorporate the finite key length security analyses \cite{Security:Finite_Hayashi, Decoy:FiniteUpperBound, Decoy:Finite07, Security:Finite_Scarani, Security:FinitePostProcessing} in the key generation rate calculation.

\subsection{Simulating the Set-up in
\cite{ExpQKD:PKU_Untrusted}}\label{Sec:SimPKU}

\begin{figure}\center
  \includegraphics[width=9cm]{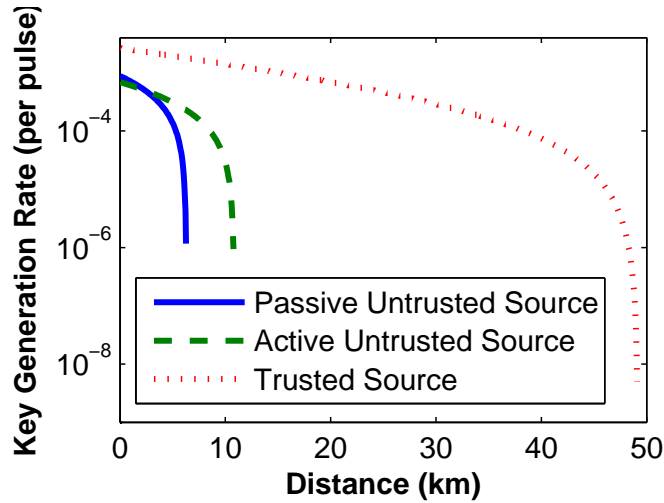}\\
  \caption{Simulation result of Weak+Vacuum \cite{Decoy:Practical}
  protocol based on the experimental parameters in \emph{\cite{ExpQKD:PKU_Untrusted}}: Data size is $9.05\times10^7$ (this data size is reported in \cite{ExpQKD:PKU_Untrusted}. It is smaller than the data size we assumed in other simulations. If a larger data size was used, we would expect some improvements on the simulation results), the intensity
  monitor efficiency $\eta_\mathrm{IM}=0.8$, the intensity
  monitor noise $\sigma_\mathrm{IM}=3.097\times10^5$, the intensity monitor
  conservative interval $\varsigma=6\sigma_\mathrm{IM}$, the source at Bob's side is
  Poissonian
  centered at $M_\mathrm{B}=6.411\times10^7$
  photons per pulse, the beam splitting ratio $q=0.05$,
  and the system is in Plug \& Play.
  Confidence level is set as
  $\tau\ge1-10^{-10}$. 6 standard deviations are considered in the statistical
  fluctuation. Single photon detector efficiency is $4\%$, detector error rate is
$1.39\%$, and background rate $Y_0=9.38\times10^{-5}$. Comparing
with Figure \ref{FIG:WV_GYS_ImPerfectIM_PnP_Finite_LargeDataSize},
we can see that higher background rate limits the system
performance. }\label{FIG:WV_PKU_ImPerfectIM_PnP}
\end{figure}

\cite{ExpQKD:PKU_Untrusted} reports so far the only experimental
implementation of QKD that considers the untrusted source
imperfection. However, as we discussed above, the analysis proposed
in \cite{ExpQKD:PKU_Untrusted} is challenging to use, and was not
applied to analyze the experimental results reported in the same
paper. Our analysis, however, provides a method to understand the
experimental results of \cite{ExpQKD:PKU_Untrusted}. Here, we
present a numerical simulation of the system used in
\cite{ExpQKD:PKU_Untrusted}.

We have to characterize the noise and conservative interval of the
intensity monitor used in \cite{ExpQKD:PKU_Untrusted}. The
experimental results reported in \cite{ExpQKD:PKU_Untrusted} show
that the \emph{measured} input photon number distribution is
centered at $M=1.818\times10^7$ with a standard deviation
$3.097\times10^5$ on Alice's side. If we assume the source at Bob's
side as Poissonian, the \emph{actual} input photon number
distribution on Alice's side will also be Poissonian. The detector
noise is then
$\sigma_\mathrm{IM}=\sqrt{(3.097\times10^5)^2-1.818\times10^7}
=3.097\times10^5$. We set the detector conservative interval as
constant $\varsigma=6\sigma_\mathrm{IM}$.

Source intensity at Bob's side $M_\mathrm{B}$ can be calculated in
the following manner: Since $M=1.818\times10^7$ at a distance $l=25$
km, and beam splitting ratio $q=0.05$, we can conclude that
\begin{equation*}
    M_\mathrm{B} = \frac{M}{\alpha l(1-q)}=6.411\times10^7.
\end{equation*}
Here we assume that the fibre loss coefficient $\alpha=-0.21$ dB/km.

The other parameters are directly cited from
\cite{ExpQKD:PKU_Untrusted}: The set-up is in Plug \& Play
structure. The efficiency of the intensity monitor is
$\eta_\mathrm{IM}=0.8$. Single photon detector efficiency is $4\%$,
detector error rate is $1.39\%$, and background rate
$Y_0=9.38\times10^{-5}$. As in previous sections, confidence level
is set as $\tau\ge1-10^{-10}$.

In the experiment reported in \cite{ExpQKD:PKU_Untrusted}, the data size
is $9.05\times10^7$ (it is smaller than the data size we assumed in other simulations. If a larger data size were used, we would expect some improvements on the simulation results). We ran numerical simulation with 6 standard
deviations that are considered in the statistical fluctuation. The
simulation results are shown in Figure
\ref{FIG:WV_PKU_ImPerfectIM_PnP}. It is encouraging to see that the
simulation yields positive key rates for both passive and active
estimates at short distances.

\subsection{Summary}

From the numerical simulations shown in Figures
\ref{FIG:GLLP_GYS_PerfectIM} -- \ref{FIG:WV_PKU_ImPerfectIM_PnP}, we
conclude that four important parameters can improve the efficiency
of passive estimate on an untrusted source: First, the beam
splitting ratio $q$ should be very small, say 1\%, to send most
input photons to the intensity monitor. Second, the light source
should be very bright (say, $10^{10}$ photons per pulse). This is
particularly important for Plug \& Play structure. Third, the
imperfections of the intensity monitor should be small. That is, the
intensity monitor should have high efficiency (say, over 70\%) and
high precision (say, can resolve photon number difference of
$6\times10^5$). Fourth, the data size should be large (say,
$10^{12}$ bits) to minimize the statistical fluctuation.

In brief, a largely biased beam splitter, a bright source, an efficient and
precise intensity monitor, and a large data size are four key
conditions that can substantially improve the efficiency of the passive
estimate on an untrusted source. The latter three conditions are
also applicable in the active estimate scheme.

An important advantage of decoy state protocols is that the key
generation rate will only drop linearly as channel transmittance
decreases
\cite{Decoy:Hwang,Decoy:LoISIT,Decoy:LoPRL,Decoy:Practical,
Decoy:WangPRL,Decoy:WangPRA,Decoy:ZhaoPRL,Decoy:ZhaoISIT}, while in
many non-decoy protocols, like the GLLP protocol \cite{GLLP}, the key
generation rate will drop quadratically as channel transmittance
decreases. In the simulations shown in Figures
\ref{FIG:WV_GYS_PerfectIM} -- 
\ref{FIG:WV_PKU_ImPerfectIM_PnP}, we can see that this important
advantage is preserved even if the source is unknown and untrusted.

\section{Preliminary Experimental Test}\label{Sec:PreExp}

We performed some preliminary experiments to test our analysis. The
basic idea is to measure some key parameters of our system,
especially the characteristics of the source, with which we can
perform numerical simulation to show the expected performance.

The experimental set-up is shown in Figure \ref{FIG:ExpSetup}. It is
essentially a modified commercial plug \& play QKD system. We added
a 1/99 beam splitter (1/99 BS in Figure \ref{FIG:ExpSetup}), a
photodiode (PD in Figure \ref{FIG:ExpSetup}), and a high-speed
oscilloscope (OSC in Figure \ref{FIG:ExpSetup}) on Alice's side.
These three parts comprise Alice's PNA.

\begin{figure}\center
  \includegraphics[width=10cm]{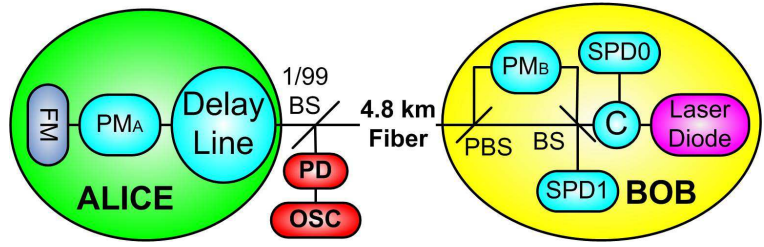}\\
  \caption{Experimental set-up. Alice and Bob: Commercial plug \& play
  QKD system. PD: photodiode. OSC: high-speed oscilloscope. 1/99 BS:
  1/99 beam splitter. FM: faraday mirror. PM$_\mathrm{x}$: phase modulators.
  PBS: polarizing beam splitter. BS: beam splitter. SPD$_\mathrm{x}$:
  single photon detector. C: circulator.}\label{FIG:ExpSetup}
\end{figure}

When Bob sends strong laser pulses to Alice, the photodiode (PD in
Figure \ref{FIG:ExpSetup}) will convert input photons into
photoelectrons, which are then recorded by the oscilloscope (OSC in
Figure \ref{FIG:ExpSetup}). In the recorded waveform, we calculated
the area below each pulse. This area is proportional to the number
of input photons. The conversion coefficient between the area and
photon number is calibrated by measuring the average input laser
power on Alice's side with a slow optical power meter.

In our experiment, 299 700 pulses are generated by the laser diode
at Bob's side (Laser Diode in Figure \ref{FIG:ExpSetup}) at a
repetition rate of 5 MHz with 1 ns pulse width. They are all
split into U pulses and L pulses (see Figure \ref{FIG:UL_Pulse})
by the 1/99 beam splitter (1/99 BS in Figure \ref{FIG:ExpSetup}).
The L pulses are measured by a photodiode (PD in Figure
\ref{FIG:ExpSetup}). The measurement results are acquired and
recorded by an oscilloscope (OSC in Figure \ref{FIG:ExpSetup}).

The experimental results of the photon number statistics are plotted
in Figure \ref{FIG:ExpDistribution}. The measured photon number
distribution is centered at $M=5.101\times10^6$ photons per pulse, with
standard deviation $6.557\times10^4$ on Alice's side. We can see
that the actual photon number distribution fits a Gaussian
distribution (shown as the blue line) well. Other experimental
results are shown in Table \ref{Tab:ExpParameter}.

\begin{table}[!b]\center
\caption{Parameters measured from our preliminary experiment
described in Section \ref{Sec:PreExp}.}\label{Tab:ExpParameter}
\begin{tabular}{cccc}
  \hline
   $\alpha$ & $\eta_\mathrm{det}$ & $e_\mathrm{det}$ & $Y_0$ \\
  \hline
  -0.21 dB/km& 4.89\% & 0.21\% & $8.4\times10^{-5}$ \\
  \hline
\end{tabular}
\end{table}

The intensity monitor noise is calculated in a similar manner to that 
in Section \ref{Sec:SimPKU}: Assuming the source is Poissonian at
Bob's side, which means the actual input photon number on Alice's
side is also Poissonian, the noise is then given by
$\sigma_\mathrm{IM}=\sqrt{(6.557\times10^4)^2-5.101\times10^6} =6.553\times 10^4.$
As in Section \ref{Sec:SimPKU}, we set the detector conservative
interval as a constant $\varsigma=6\sigma_\mathrm{IM}$.

Source intensity at Bob's side $M_\mathrm{B}$ can be calculated in
the following matter (which is similar to the one we used in Section
\ref{Sec:SimPKU}): Since $M=5.101\times10^6$ at a distance $l=4.8$
km, and beam splitting ratio $q=0.01$, we can conclude that
\begin{equation*}
    M_\mathrm{B} = \frac{M}{\alpha l(1-q)}=6.500\times10^6.
\end{equation*}
Here we know that the fibre loss coefficient $\alpha=-0.21$ dB/km.

\begin{figure}\center
  \includegraphics[width=10cm]{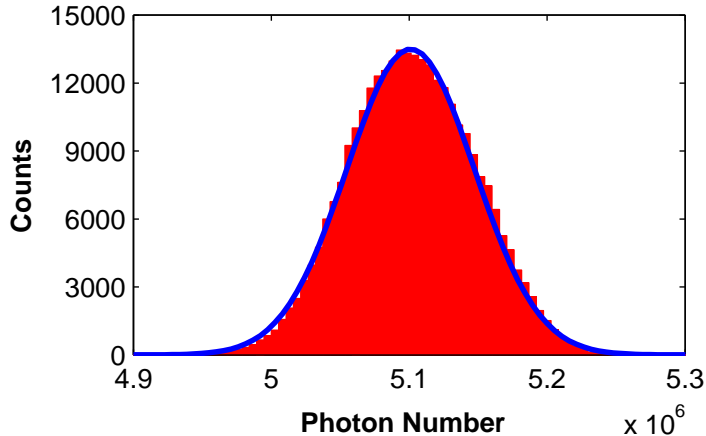}\\
  \caption{Experimentally measured photon number statistics for
  299 700 pulses. The distribution centered at $5.101\times10^6$
  photons per pulse, with standard deviation $6.557\times10^4$. Blue line shows
  a Gaussian fit of the actual distribution.}\label{FIG:ExpDistribution}
\end{figure}

The simulation result is shown in Figure
\ref{FIG:WV_Toronto_ImPerfectIM_PnP}, in which the data size is set
as $10^{12}$ \footnote{Data size in our experiment is much smaller
than the data size assumed in numerical simulation. The purpose of
our preliminary experiment is to test if it is possible to achieve
positive key rate with our current system.}. We can see that it is
possible to achieve positive key rate at moderate distances using
the security analysis presented in this paper.

\section{Conclusion}

In this paper, we present the first passive security analysis for
QKD with an untrusted source, with a complete security proof. Our
proposal is compatible with inefficient and noisy intensity
monitors, which is not considered in \cite{Security:UntrustedSource}
or in \cite{ExpQKD:PKU_Untrusted}. Our analysis is also compatible
with a finite data size, which is not considered in
\cite{ExpQKD:PKU_Untrusted}. Comparing to the active estimate scheme
proposed in \cite{Security:UntrustedSource}, the passive scheme
proposed in this paper significantly reduces the challenges to
implement the ``Plug \& Play'' QKD with unconditional security.
 Our proposal can
be applied to practical QKD set-ups with untrusted sources,
especially the plug \& play QKD set-ups, to guarantee the security.

\begin{figure}[!t]\center
  \includegraphics[width=9cm]{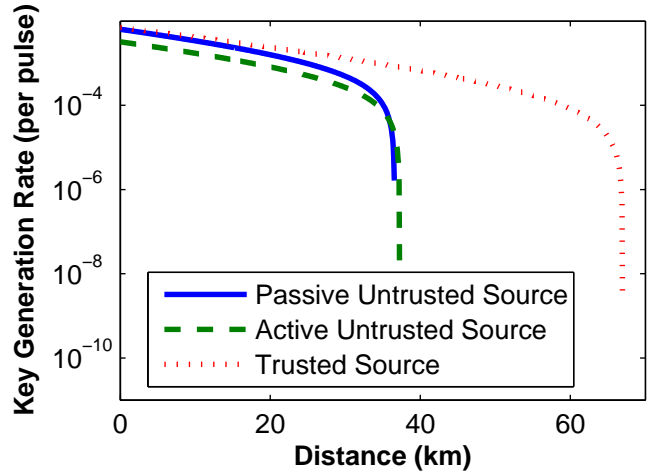}\\
  \caption{Simulation result of Weak+Vacuum \cite{Decoy:Practical}
  protocol based
  on experimental parameters from \emph{our QKD system}.
  We assume
  that the data size is $10^{12}$ bits, the intensity
  monitor efficiency $\eta_\mathrm{IM}=0.7$, the intensity
  monitor noise $\sigma_\mathrm{IM}=6.553\times10^4$, the intensity monitor
  conservative interval $\varsigma=6\sigma_\mathrm{IM}$, the source at Bob's lab is Poissonian
  centered at $M_\mathrm{B}=6.500\times10^6$
  photons per pulse, the beam splitting ratio $q=0.01$,
  and the system is in the Plug \& Play structure.
  Confidence level is set as
  $\tau\ge1-10^{-10}$. 6 standard deviations are considered in the statistical
  fluctuation.
  Experimental parameters are listed in Table \ref{Tab:ExpParameter}.
  }\label{FIG:WV_Toronto_ImPerfectIM_PnP}
\end{figure}

We point out four important conditions that can improve the
efficiency of the passive estimate scheme proposed in this paper:
First, the beam splitter in PNA should be largely biased to send
most photons to the intensity monitor. Second, the light source
should be bright. Third, the intensity monitor should have high
efficiency and precision. Fourth, the data size should be large to
minimize the statistical fluctuation. These four conditions are
confirmed in extensive numerical simulations.

In the simulations shown in Figures \ref{FIG:WV_GYS_ImPerfectIM_PnP}
--  \ref{FIG:WV_PKU_ImPerfectIM_PnP} and Figure
\ref{FIG:WV_Toronto_ImPerfectIM_PnP}, we made an additional
assumption that the intensity monitor has a constant Gaussian noise.
This assumption is \emph{not} required by our security analysis. It
will be interesting to experimentally verify this model in future.

The numerical simulations show that if the above conditions are met,
the efficiency of the passive untrusted source estimate is close to
that of the trusted source estimate, and is roughly twice as high as the
efficiency of the active untrusted source estimate. Nonetheless, the
efficiency of active estimate scheme proposed in
\cite{Security:UntrustedSource} may be improved to the level that is
similar to the efficiency of passive estimation. This is briefly discussed below Equation \eqref{Eq:SamplingCodingDeviation_Proof}. The security of
the improved active estimate scheme is beyond the scope of the current
paper, and is subject to further investigation.

Numerical simulations in Figures \ref{FIG:WV_GYS_PerfectIM} --
\ref{FIG:WV_PKU_ImPerfectIM_PnP} and Figure
\ref{FIG:WV_Toronto_ImPerfectIM_PnP} show that the key generation
rate drops linearly as the channel transmittance decreases. This is
an important advantage of decoy state protocols over many other QKD
protocols, and is preserved in our untrusted source analysis.

Our preliminary experimental test highlights the feasibility of our
proposed passive estimate scheme. Indeed, our scheme can be easily
implemented by making very simple modifications (by adding a few
commercial modules) to a commercial Plug \& Play QKD system.

A remaining practical question in our proposal is: How to calibrate
the noise and the conservative interval of the intensity monitor?
Note that these two parameters may not be constant at different
intensity levels. Moreover, the noise may not be Gaussian. It is not
straightforward to define the conservative interval and its
confidence.

\ack

The authors thank Lei Wu and Marc Napoleon for performing some
experimental tests, which is supported by NSERC USRA. The authors
thank Jean-Christian Boileau for discussions at the early stage of
this work. Support of the funding agencies CFI, CIPI, the CRC
program, CIFAR, MITACS, NSERC, OIT, PREA, and Quantum Works is
gratefully acknowledged.

\begin{appendix}

\section{The Phase Randomization Assumption}\label{App:PR}

In this section, we will show that for the state that is accessible to Eve, Alice's phase randomization is equivalent to performing a QND measurement on the photon number of the input pulses.

\begin{proof}
Before the phase randomization, the state that is shared by Alice, Bob, and Eve is
\begin{equation}\label{EQ:StateABE}
\ket{\psi}_\text{ABE}=\sum_{m,n}b_{m,n}\ket{E_n}\ket{m}.
\end{equation}
After the phase randomization, a random phase is applied to each pulse, and this phase is inaccessible to Eve. The state becomes
\begin{equation}\label{EQ:StateABE_PR}
\ket{\psi}_\text{ABE}^\text{pr}=\sum_{m,n}b_{m,n}\ket{E_n}\ket{m}e^{im\theta}.
\end{equation}
Its density matrix is
\begin{equation}\label{EQ:DM_ABE_PR}
\rho_\text{ABE}^\text{pr}=\sum_{m,n,m',n'}b_{m,n}b^*_{m',n'}\ket{E_n}\bra{E_{n'}} \otimes \ket{m} \bra{m'} e^{i(m-m')\theta}.
\end{equation}
Since $\theta$ is not known to Eve or Bob, the state that is accessible to Bob and Eve is
\begin{equation}\label{EQ:DM_BE_PR}
\begin{aligned}
\rho_\text{BE}^\text{pr} &= \frac{1}{2\pi} \sum_{m,n,m',n'}b_{m,n}b^*_{m',n'}\ket{E_n}\bra{E_{n'}} \otimes \ket{m} \bra{m'} \int_0^{2\pi} e^{i(m-m')\theta} \text{d}\theta \\
&= \sum_{m,n,n'}b_{m,n}b^*_{m,n'}\ket{E_n}\bra{E_{n'}} \otimes \ket{m} \bra{m}.
\end{aligned}
\end{equation}

Instead of considering the phase randomization, now let us analyze the impact of a photon number measurement on the state given in Equation \eqref{EQ:StateABE}. Before Alice's measurement, the density matrix is given by

\begin{equation}\label{EQ:DM_ABE}
\rho_\text{ABE}=\sum_{m,n,m',n'}b_{m,n}b^*_{m',n'}\ket{E_n}\bra{E_{n'}} \otimes \ket{m} \bra{m'}.
\end{equation}

After the QND measurement of the photon number, Alice knows the photon number. However, Eve and Bob do not know Alice's measurement result. Therefore, the state that is accessible to Bob and Eve is given by

\begin{equation}\label{EQ:DM_BE_QND}
\begin{aligned}
\rho_\text{BE}^\text{QND} &= \sum_{m''}\ket{m''}\bra{m''}\rho_\text{ABE}\ket{m''}\bra{m''}\\
&= \sum_{m,n,n'}b_{m,n}b^*_{m,n'}\ket{E_n}\bra{E_{n'}} \otimes \ket{m} \bra{m}\\
&=\rho_\text{BE}^\text{pr}.
\end{aligned}
\end{equation}
\end{proof}

From the above equation, we conclude that the two different processes---i) phase randomization by Alice and ii) a photon-number non-demolition measurement---actually give exactly the same density matrix for the Eve-Bob system. Therefore, phase randomization is mathematically equivalent to a photon-number non-demolition measurement. For this reason, we can consider the output state by Alice as a classical mixture of Fock states.

\section{Security Analysis for Untagged Bits}\label{App:Security}

In this section, for the convenience of the readers, we  recapitulate the security analysis that is presented in \cite{Security:UntrustedSource}.

Assume that $k$ pulses are sent from Alice to Bob. Alice and Bob do not know which bits are untagged. However, either from the active estimate presented in \cite{Security:UntrustedSource} or the passive estimate presented in the current paper, they know that at least $(1-\Delta-\epsilon) k$ pulses are untagged with high confidence.

Alice and Bob can measure the overall gain $Q_e$ and the overall QBER $E_e$. They do not know the gain $Q$ and the QBER $E$ for the untagged bits because they do not know which bits are untagged. Nonetheless, they can then estimate the upper
bounds and the lower bounds of them. The upper bound and the lower bound of
$Q$ are \cite{Security:UntrustedSource}
\begin{equation}\label{eq:Qbound}
\begin{aligned}
    \overline{Q} &= \frac{Q_e}{1-\Delta-\epsilon},\\
    \underline{Q} &= \max(0,
    \frac{Q_e-\Delta-\epsilon}{1-\Delta-\epsilon}).
\end{aligned}
\end{equation}
The upper bound and lower bound of $E\cdot Q$ can be estimated as \cite{Security:UntrustedSource}
\begin{equation}\label{eq:Ebound}
    \begin{aligned}
    \overline{E\cdot Q} &=
    \frac{Q_e E_e}{1-\Delta-\epsilon},\\
    \underline{E\cdot Q} &=
    \max(0,\frac{Q_e E_e-\Delta-\epsilon}{1-\Delta-\epsilon}).
    \end{aligned}
\end{equation}

For untagged bits (i.e., $m\in[(1-\delta) M, (1+\delta) M]$), we can
show that the upper bound and the lower bound of the probability that the output photon number from Alice is $n$ are \cite{Security:UntrustedSource}:
\begin{equation}\label{eq:pnbound}
\begin{aligned}
    \overline{P_n} &= \left\{
                        \begin{array}{ll}
                         (1-\lambda)^{(1-\delta)M}, & \hbox{if
                         $n=0$;}\\
                          {(1+\delta)M\choose n}\lambda^n(1-\lambda)^{(1+\delta)M-n}, & \hbox{if $1\le n \le (1+\delta)M$;} \\
                          0, & \hbox{if $n>(1+\delta)M$;}
                        \end{array}
                      \right.\\
    \underline{P_n} &= \left\{
                        \begin{array}{ll}
                        (1-\lambda)^{(1+\delta)M}, & \hbox{if
                        $n=0$;}\\
                          {(1-\delta)M\choose n}\lambda^n(1-\lambda)^{(1-\delta)M-n}, & \hbox{if $1\le n \le (1-\delta)M$;} \\
                          0, & \hbox{if $n>(1-\delta)M$;}
                        \end{array}
                      \right.
\end{aligned}
\end{equation}
under \textbf{Condition 1:}
\begin{equation}\label{eq:condition1}
(1+\delta)M\lambda<1.
\end{equation}

The key rate calculation depends on the QKD protocol that is implemented. For the GLLP \cite{GLLP} protocol with an untrusted source, the key generation rate is given by \cite{Security:UntrustedSource}:
\begin{equation}\label{eq:rateGLLP_Generalized}
R \ge
\frac{1}{2}\{-Q_ef(E_e)H_2(E_e)+(\underline{Q}+\underline{P_0}+\overline{P_1}-1)[1-H_2(\frac{Q_eE_e}{\underline{Q}+\underline{P_0}+\overline{P_1}-1})]\},
\end{equation}
where $Q_e$ and $E_e$ are measured experimentally, $\underline{Q}$ can be calculated from Equation \eqref{eq:Qbound}, and $\underline{P_0}$ and $\overline{P_1}$ can be calculated from Equation (\ref{eq:pnbound}).

For decoy state protocols \cite{Decoy:Hwang,Decoy:LoISIT,Decoy:LoPRL,Decoy:Practical,Decoy:WangPRL, Decoy:WangPRA, Decoy:ZhaoPRL, Decoy:ZhaoISIT}, the key generation rate (with an untrusted source) is given by \cite{Security:UntrustedSource}:
\begin{equation}\label{eq:rateWV}
R \ge 
\frac{1}{2}\{-Q_{e}^Sf(E_{e}^S)H_2(E_{e}^S)+(1-\Delta-\epsilon)\underline{Q_1^S}[1-H_2(\overline{e_1^S})]\}
,
\end{equation}
where $Q_{e}^S$ and $E_{e}^S$ are the overall gain and the over QBER of the signal states, respectively, and can be measured experimentally. $\underline{Q_1^S}$ and $\overline{e_1^S}$ depend on the specific decoy state protocol that is implemented. 

For weak+vacuum protocol \cite{Decoy:Practical,Decoy:WangPRL,Decoy:ZhaoISIT}, the lower bound of $Q_1^S$ for untagged bits is given by \cite{Security:UntrustedSource}:
\begin{equation}\label{eq:q1boundWV}
    \begin{aligned}
\underline{Q_1^S}= \underline{P_1^S}\frac{\underline{Q^D}\underline{P_2^S}-\overline{Q^S}\overline{P_2^D}+(\underline{P_0^S}\overline{P_2^D}-\overline{P_0^D}\underline{P_2^S})\overline{Q^V}-\frac{2\delta
   M(1-\lambda_D)^{2\delta
   M-1}\underline{P_2^S}}{[(1-\delta)M+1]!}}{\overline{P_1^D}\underline{P_2^S}-\underline{P_1^S}\overline{P_2^D}}
    \end{aligned}
\end{equation}
under \textbf{Condition 2:}
\begin{equation}
\frac{\lambda_S}{\lambda_D}>\frac{(1+\delta)M-2}{(1-\delta)M-2}\left[\frac{(1+\delta)M-2}{2\delta
M}\right]^\frac{2\delta
M}{(1-\delta)M-2}\left[\frac{(1+\delta)M-2}{(1-\delta)M-2}\cdot\frac{e^2}{2\delta
M}\right]^\frac{1}{2[(1-\delta)M-2]}.
\end{equation}
Here $Q^S$, $Q^D$ and $Q^V$ are the gains of untagged bits of the
signal state, the decoy state, and the vacuum state, respectively.
Their bounds can be estimated from Equations \eqref{eq:Qbound}. The
bounds of the probabilities can be estimated from Equations
\eqref{eq:pnbound}. $\lambda_S$ and $\lambda_D$ are Alice's internal transmittances for signal and decoy states, respectively. The upper bound of $e_1^S$ for untagged bits
 is given by \cite{Security:UntrustedSource}:
\begin{equation}\label{eq:e1boundWV}    e_1^S\le\overline{e_1^S}=\frac{\overline{E^SQ^S}-\underline{P_0^S}\underline{E^VQ^V}}{\underline{Q_1^S}},
\end{equation}
in which $E^S$ and $E^V$ are the QBERs of untagged bits of the
signal and the vacuum states, respectively. $\overline{E^SQ^S}$ and
$\underline{E^VQ^V}$ can be estimated from Equations \eqref{eq:Ebound}.
$\underline{P_0^S}$ can be estimated  by Equations \eqref{eq:pnbound}.
$\underline{Q_1^S}$ is given by Equation \eqref{eq:q1boundWV}.

For one-decoy protocol \cite{Decoy:Practical,Decoy:ZhaoPRL}, a lower bound of $Q_1^S$ and
an upper bound of $e_1^S$ for untagged bits are given by
\begin{equation}\label{eq:q1e1_OneDecoy}
    \begin{aligned}
    \underline{Q_1^S} &= \underline{P_1^S}\frac{\underline{Q^D}\underline{P_2^S}-\overline{Q^S}\overline{P_2^D}+(\underline{P_0^S}\overline{P_2^D}-\overline{P_0^D}\underline{P_2^S})\frac{\overline{E^SQ^S}}{\underline{P_0^S}E^V}-\frac{2\delta
   M(1-\lambda_D)^{2\delta
   M-1}\underline{P_2^S}}{[(1-\delta)M+1]!}}{\overline{P_1^D}\underline{P_2^S}-\underline{P_1^S}\overline{P_2^D}},\\
    \overline{e_1^S} &= \frac{\overline{E^S\cdot
    Q^S}}{\underline{Q_1^S}},
    \end{aligned}
\end{equation}
respectively, under Condition 2 in the asymptotic case. Here $Q^S$
and $Q^D$ are the gains of untagged bits of the signal state and the
decoy state, respectively. Their bounds can be estimated from Equations
\eqref{eq:Qbound}. $E^S$ is the QBER of untagged bits of the signal
state. $\overline{E^S\cdot Q^S}$ can be estimated from Equations
\eqref{eq:Ebound}. $E^V=0.5$ in the asymptotic case. The bounds of
the probabilities can be estimated from Equations \eqref{eq:pnbound}.

\section{Confidence Level in Active
Estimate}\label{App:Confidence_Active}

Among all the $V$ untagged bits, each bit has probability 1/2 to be
assigned as an untagged coding bit. Therefore, the probability that
$V_\mathrm{c}=v_\mathrm{c}$ obeys a binomial distribution. Cumulative
probability is given by \cite{LargeDeviation_Hoeffding}
\begin{equation*}
    P(V_\mathrm{c}\le\frac{V-\epsilon k}{2}
    |V=v)
    \le \exp(-\frac{\epsilon^2k^2}{2v})
\end{equation*}

For any $v\in[0,k]$, $k/v\ge 1$. Therefore, we have

\begin{equation*}
    P(V_\mathrm{c}\le\frac{V-\epsilon k}{2}|V\in[0, k]) \le \exp(-\frac{k\epsilon^2}{2}).
\end{equation*}

In the experiment described by Lemma
\ref{Lemma:ActiveConfidenceLevel}, $V\in[0,k]$ is always true.
Therefore, the above inequality reduces to

\begin{equation}\label{Eq:CumulativeBinomial}
    P(V_\mathrm{c}\le\frac{V-\epsilon k}{2}) \le \exp(-\frac{k\epsilon^2}{2}).
\end{equation}

By definition, we have
\begin{equation}\label{Eq:SumUntagged}
V=V_\mathrm{c}+V_\mathrm{s}.
\end{equation}
Substituting Equation \eqref{Eq:SumUntagged} into Equation
\eqref{Eq:CumulativeBinomial}, we have
\begin{equation}\label{Eq:SamplingCodingDeviation_Proof}
    P(V_\mathrm{c}\le V_\mathrm{s}-\epsilon k)\le
    \exp(-\frac{k\epsilon^2}{2}).\qed
\end{equation}

The above proof can be easily generalized to the case where for each
bit sent from the untrusted source to Alice, Alice randomly assigns
it as either a coding bit with probability $\gamma$, or a sampling
bit with probability $1-\gamma$. Here $\gamma\in(0,1)$ is chosen by
Alice. It is then straightforward to show that

\begin{equation}\label{Eq:SamplingCodingDeviation_General}
    P[V_\mathrm{c}\le \frac{\gamma}{1-\gamma}(V_\mathrm{s}-\epsilon k)]
    \le \exp(-2k\epsilon^2\gamma^2).
\end{equation}

When $\gamma=1/2$, Equation
\eqref{Eq:SamplingCodingDeviation_General} reduces to Equation
\eqref{Eq:SamplingCodingDeviation_Proof}.

\section{Confidence Level in Cross
Estimate}\label{App:Confidence_Cross}

From Corollary \ref{Col:1} and Corollary \ref{Col:2}, we know that

\begin{equation}\label{EQ:Individual_Confidence}
    \begin{aligned}
        P(V_\mathrm{c}^\mathrm{U} &\le V_\mathrm{s}^\mathrm{L} - \epsilon_1 k
        ) \le \exp(\frac{-k\epsilon_1^2}{2})\\
        P(V_\mathrm{s}^\mathrm{U} &\le V_\mathrm{c}^\mathrm{L} - \epsilon_2 k
        ) \le \exp(\frac{-k\epsilon_2^2}{2}).
    \end{aligned}
\end{equation}
Therefore, we have
\begin{equation}\label{EQ:Combined_Confidence}
    \begin{aligned}
        & P(V^\mathrm{U}\le V_\mathrm{s}^\mathrm{L}+V_\mathrm{c}^\mathrm{L}-
            (\epsilon_1 + \epsilon_2) k)\\
        = & P(V_\mathrm{c}^\mathrm{U}+V_\mathrm{s}^\mathrm{U}\le V_\mathrm{s}^\mathrm{L}+V_\mathrm{c}^\mathrm{L}-
            (\epsilon_1 + \epsilon_2) k)\\
        \le & P[ (V_\mathrm{c}^\mathrm{U} \le V_\mathrm{s}^\mathrm{L} - \epsilon_1
        k) \\
        &\text{ or } (V_\mathrm{s}^\mathrm{U} \le V_\mathrm{c}^\mathrm{L} - \epsilon_2 k)
        ]\\
        \le & P(V_\mathrm{c}^\mathrm{U} \le V_\mathrm{s}^\mathrm{L} - \epsilon_1
        k)\\
        &+ P(V_\mathrm{c}^\mathrm{U} \le V_\mathrm{s}^\mathrm{L} -
        \epsilon_2
        k)\\
        =& \exp(\frac{-k\epsilon_1^2}{2}) +
\exp(\frac{-k\epsilon_2^2}{2}).
    \end{aligned}
\end{equation}
In the above derivation, we made use of the fact that
$[(V_\mathrm{c}^\mathrm{U} \le V_\mathrm{s}^\mathrm{L} - \epsilon_1
        k) \text{ or } (V_\mathrm{s}^\mathrm{U} \le V_\mathrm{c}^\mathrm{L} - \epsilon_2
        k)]$ is always true if $[V_\mathrm{c}^\mathrm{U}+V_\mathrm{s}^\mathrm{U}\le V_\mathrm{s}^\mathrm{L}+V_\mathrm{c}^\mathrm{L}-
            (\epsilon_1 + \epsilon_2) k]$ is true.

\end{appendix}

\section*{References}
\bibliography{../../reference}
\bibliographystyle{osa}

\end{document}